\documentclass[11pt, letterpaper]{article}
\usepackage{fullpage}
\usepackage{fullpage}

\usepackage[T1]{fontenc}

\usepackage[colorlinks=true,pdfpagemode=none,linkcolor=blue,citecolor=blue]{hyperref}
\usepackage{amsmath,amsfonts,amsthm}
\usepackage[figure,boxed,lined]{algorithm2e}
\usepackage{color}
\usepackage{multirow}
\usepackage{enumerate}
\usepackage{graphicx}

\newtheorem{theorem}{Theorem}[section]

\newtheorem{lemma}[theorem]{Lemma}

\newtheorem{fact}[theorem]{Fact}

\newtheorem{invariant}[theorem]{Invariant}

\newcommand{\cenergy}{C_{\mathcal{E}}}
\newcommand{\crho}{C_{\rho^*}}
\newcommand{\cincrease}{C_{B}}
\newcommand{\cdrop}{C_{D}}

\newcommand{\bbR}{\mathbb{R}}

\newcommand{\ceil}[1]{\lceil #1 \rceil}

\newcommand{\norm}[2]{\|#1\|_{#2}}

\newcommand{\inorm}[1]{\|#1\|_{\infty}}

\newcommand{\tO}[1]{\widetilde{O}\left(#1\right)}

\newcommand{\hG}{\widehat{G}}

\newcommand{\hy}{\widehat{y}}

\newcommand{\hu}{\widehat{u}}

\newcommand{\energy}[2]{\mathcal{E}_{#1}(#2)}

\newcommand{\eps}{\varepsilon}

\newcommand{\tf}{\widetilde{f}}
\newcommand{\hf}{\widehat{f}}

\newcommand{\hr}{\widehat{r}}

\newcommand{\tu}{\widetilde{u}}

\newcommand{\hrho}{\widehat{\rho}}
\newcommand{\hgamma}{\widehat{\gamma}}
\newcommand{\tgamma}{\widetilde{\gamma}}

\newcommand{\htheta}{\widehat{\theta}}
\newcommand{\hdelta}{\hat{\delta}}

\newcommand{\vphi}{\boldsymbol{\mathit{\phi}}}

\newcommand{\vrho}{\boldsymbol{\mathit{\rho}}}
\newcommand{\hvrho}{\boldsymbol{\mathit{\widetilde{\rho}}}}

\newcommand{\vgamma}{\boldsymbol{\mathit{\gamma}}}
\newcommand{\vtheta}{\boldsymbol{\mathit{\theta}}}

\newcommand{\hvtheta}{\boldsymbol{\mathit{\widehat{\theta}}}}

\newcommand{\vsigma}{\boldsymbol{\mathit{\sigma}}}
\newcommand{\vvarsigma}{\boldsymbol{\mathit{\varsigma}}}

\newcommand{\hvsigma}{\boldsymbol{\mathit{\widehat{\sigma}}}}
\newcommand{\hvgamma}{\boldsymbol{\mathit{\widehat{\gamma}}}}
\newcommand{\tvgamma}{\boldsymbol{\mathit{\widetilde{\gamma}}}}

\newcommand{\vkappa}{\boldsymbol{\mathit{{\kappa}}}}
\newcommand{\vchi}{\boldsymbol{\mathit{{\chi}}}}
\newcommand{\huu}{\boldsymbol{\mathit{{\widehat{u}}}}}

\newcommand{\tphi}{\widetilde{\phi}}
\newcommand{\hvphi}{\boldsymbol{\widehat{\phi}}}
\newcommand{\tvphi}{\boldsymbol{\tilde{\phi}}}

\newcommand{\bb}{\boldsymbol{\mathit{b}}}

\newcommand{\ff}{\boldsymbol{\mathit{f}}}
\renewcommand{\gg}{\boldsymbol{\mathit{g}}}
\newcommand{\tff}{\boldsymbol{\mathit{\widetilde{f}}}}

\newcommand{\ozz}{\boldsymbol{\mathit{\bar{z}}}}

\newcommand{\hff}{\boldsymbol{\mathit{\widehat{f}}}}
\newcommand{\hyy}{\boldsymbol{\mathit{\widehat{y}}}}

\newcommand{\ogg}{\boldsymbol{\mathit{\bar{g}}}}
\renewcommand{\gg}{\boldsymbol{\mathit{g}}}
\newcommand{\hh}{\boldsymbol{\mathit{h}}}

\newcommand{\rr}{\boldsymbol{\mathit{r}}}
\newcommand{\hrr}{\boldsymbol{\mathit{\widehat{r}}}}

\newcommand{\uu}{\boldsymbol{\mathit{u}}}

\newcommand{\yy}{\boldsymbol{\mathit{y}}}
\newcommand{\zz}{\boldsymbol{\mathit{z}}}

\newcommand{\LL}{\boldsymbol{\mathit{L}}}





\begin{document}

\clubpenalty=10000
\widowpenalty = 10000

\title{Computing Maximum Flow with Augmenting Electrical Flows}

\author{Aleksander M\k{a}dry\thanks{Supported by NSF grant CCF-1553428, and a Sloan Research Fellowship.}\\
       {MIT}\\
       { madry@mit.edu}}
\date{}

\maketitle
\begin{abstract}
We present an $\tO{m^{\frac{10}{7}}U^{\frac{1}{7}}}$-time algorithm for the maximum $s$-$t$ flow problem (and the minimum $s$-$t$ cut problem) in directed graphs with $m$ arcs and largest integer capacity $U$. This matches the running time of the $\tO{(mU)^{\frac{10}{7}}}$-time algorithm of M\k{a}dry \cite{Madry13} in the unit-capacity case, and improves over it, as well as over the $\tO{m
\sqrt{n} \log U}$-time algorithm of  Lee and Sidford \cite{LeeS14}, whenever $U$ is moderately large and the graph is sufficiently sparse. 

By well-known reductions, this also gives us an $\tO{m^{\frac{10}{7}}B^{\frac{1}{7}}}$-time algorithm for the maximum-cardinality  bipartite $\bb$-matching problem in which the largest integer demand is $B$. This, again, matches the $\tO{(mB)^{\frac{10}{7}}}$-time algorithm of M\k{a}dry \cite{Madry13}, when $B=1$, which corresponds to the maximum-cardinality bipartite matching problem, and outperforms it, as well as the $\tO{m
\sqrt{n} \log B}$-time algorithm of Lee and Sidford \cite{LeeS14}, for moderate values of $B$ and sufficiently sparse graphs.

One of the advantages of our algorithm is that it is significantly simpler than the ones presented in \cite{Madry13} and \cite{LeeS14}. In particular, these algorithms employ a sophisticated interior-point method framework, while our algorithm is cast directly in the classic augmenting path setting that almost all the combinatorial maximum flow algorithms use. At a high level, the presented algorithm takes a primal dual approach in which each iteration uses electrical flows computations both to find an augmenting $s$-$t$ flow in the current residual graph and to update the dual solution. We show that by maintain certain careful coupling of these primal and dual solutions we are always guaranteed to make significant progress.

\end{abstract}

\thispagestyle{empty}
\newpage
\setcounter{page}{1}

\section{Introduction}

  The maximum $s$-$t$ flow problem and its dual, the minimum $s$-$t$ cut problem,
  are two of the most fundamental and extensively studied graph problems in combinatorial optimization ~\cite{Schrijver03,AhujaMO93,Schrijver02}. They have a wide range of applications (see~\cite{AhujaMOR95}), are often used as subroutines in other algorithms (see, e.g., \cite{AroraHK05,Sherman09}), and a number of other important problems  -- e.g., bipartite matching problem \cite{CormenLRS09} -- can be reduced to them.  Furthermore, these two problems were often a testbed for development of fundamental algorithmic tools and concepts. Most prominently, the Max-Flow Min-Cut theorem \cite{EliasFS56,FordF56} constitutes the prototypical primal-dual relation.
  
  Several decades of extensive work resulted in a number of developments on these problems (see Goldberg and Rao \cite{GoldbergR98} for an overview) and  many of their generalizations and special cases. Still, despite all this effort, the basic problem of computing maximum $s$-$t$ flow and minimum $s$-$t$ cut in general graphs resisted progress for a long time. In particular, for a number of years, the best running time bound for the problem  was an $O(m\min\{m^{\frac{1}{2}},n^{\frac{2}{3}}\}\log (n^2/m) \log U)$ (with $U$ denoting the largest integer arc capacity)  bound established in a a breakthrough paper by Goldberg and Rao \cite{GoldbergR98} and this bound,  in turn, matched the $O(m\min\{m^{\frac{1}{2}},n^{\frac{2}{3}}\})$ bound for unit-capacity graphs due to Even and Tarjan \cite{EvenT75} -- and, independently, Karzanov \cite{Karzanov73} -- that were put forth more than 40 years ago. 
  
  The above bounds were improved only fairly recently. Specifically, in 2013, M\k{a}dry \cite{Madry13} presented an interior-point method based framework for flow computations that gave an $\tO{m^{\frac{10}{7}}}$-time\footnote{We recall that $\tO{f}$ denotes $O(f \log^c f)$, for some constant $c$.} algorithm for the unit-capacity case of the maximum $s$-$t$ flow and minimum $s$-$t$ cut problems. This finally broke the long-standing $\tO{n^{\frac{3}{2}}}$ running time barrier for sparse graphs, i.e., for $m=O(n)$. Later on, Lee and Sidford \cite{LeeS14} developed a variant of interior-point method that enabled them to obtain improvement for the regime of dense graphs. In particular, their algorithm is able to compute the (general) maximum $s$-$t$ flow and minimum $s$-$t$ cut in $\tO{m\sqrt{n}\log U}$ time and thus improve over the Goldberg-Rao bound whenever the input graph is sufficiently dense.

It is also worth mentioning that, as a precursor to the above developments, substantial progress was made in the context of $(1-\eps)$-approximate variant of the maximum $s$-$t$ flow problem in undirected graphs. In 2011, Christiano et al. \cite{ChristianoKMST11} developed an algorithm that allows one to compute a $(1+\eps)$-approximation to the undirected maximum $s$-$t$ flow (and the minimum $s$-$t$ cut) problem in $\tO{mn^{\frac{1}{3}} \eps^{-11/3}}$ time. Their result relies on devising a new approach to the problem that combines electrical flow computations with multiplicative weights update method (see \cite{AroraHK05}). Later, Lee et al. \cite{LeeRS13} presented a quite different -- but still electrical-flow-based -- algorithm that employs purely gradient-descent-type view to obtain an $\tO{mn^{1/3}\eps^{-2/3}}$-time $(1+\eps)$-approximation for the case of unit capacities. Very recently, this line of work was culminated by Sherman \cite{Sherman13} and Kelner et al. \cite{KelnerLOS14} who independently showed how to integrate non-Euclidean gradient-descent methods with fast poly-logarithmic-approximation algorithms for cut problems of M\k{a}dry \cite{Madry10b} to get an $O(m^{1+o(1)}\eps^{-2})$-time $(1+\eps)$-approximation to the undirected maximum flow problem. Then, Peng \cite{Peng16} built on these works to obtain a truly nearly-linear, i.e.,  $\tO{m\eps^{-2}}$, running time.
  
Finally, we note that, in parallel to the above work that is focused on designing weakly-polynomial algorithms for the maximum $s$-$t$ flow and minimum $s$-$t$ cut problems, there is also a considerable interest in obtaining running time bounds that are strongly-polynomial, i.e., that do not depend on the values of arc capacities. The current best such bound is $O(mn)$ and it follows by combining the algorithms of King et al. \cite{KingRT94} and Orlin \cite{Orlin13}. 
  
  %
  
  
  
  \paragraph{Bipartite Matching Problem.}
  
  Another problem that is related to the maximum $s$-$t$ problem -- and, in fact, can be reduced to it -- is the (maximum-cardinality) bipartite matching problem. This problem is a fundamental assignment task with numerous applications (see, e.g., \cite{AhujaMO93,LovaszP86}) and long history that has its roots in the works of Frobenius \cite{Frobenius12,Frobenius17} and K\"onig \cite{Konig15,Konig16,Konig23} from the early 20th century (see \cite{Schrijver05}). Already in 1931, K\"onig \cite{Konig31} and Egerv\'ary \cite{Egervary31} provided first constructive characterization of maximum matchings in bipartite graphs. This characterization can be turned into a polynomial-time algorithm. Then, in 1973, Hopcroft and Karp \cite{HopcroftK73} and, independently, Karzanov \cite{Karzanov73}, devised the celebrated $O(m\sqrt{n})$-time algorithm. For 40 years this bound remained the best one known in the regime of relatively sparse graphs. Only recently M\k{a}dry \cite{Madry13} obtained an improved, $\tO{m^{10/7}}$ running time. It turns out, however, that whenever the input graph is dense, i.e., when $m$ is close to $n^2$ even better bounds can be obtain. In this setting, one can combine the algebraic approach of Rabin and Vazirani \cite{RabinV89} -- that itself builds on the work of Tutte \cite{Tutte47} and Lov\'asz \cite{Lovasz79} -- with matrix-inversion techniques of Bunch and Hopcroft \cite{BunchH74} to get an algorithm that runs in $O(n^{\omega})$ time (see \cite{Mucha05}), where $\omega\leq 2.3727$ is the exponent of matrix multiplication \cite{CoppersmithW90, Vassilevska12}. Also, later on, Alt et al. \cite{AltBMP91}, and Feder and Motwani \cite{FederM95} developed combinatorial algorithms that offer a slight improvement -- by a factor of, roughly, $\log n/\log \frac{n^2}{m}$ -- over the $O(m\sqrt{n})$ bound of Hopcroft and Karp whenever the graph is sufficiently dense. 
  
  Finally, a lot of developments has been done in the context of the (maximum-cardinality) matching problem in general, i.e., not necessarily bipartite, graphs. Starting with the pioneering work of Edmonds \cite{Edmonds65}, these developments led to bounds that essentially match the running time guarantees that were previously known only for bipartite case. More specifically, the running time bound of $O(m\sqrt{n})$ for the general-graph case was obtained by Micali and Vazirani \cite{MicaliV80,Vazirani94} (see also \cite{GabowT91} and \cite{GoldbergK04}). Then, Mucha and Sankowski \cite{MuchaS04} gave an $O(n^{\omega})$-time algorithms for general graphs that builds on the algebraic characterization of the problem due to Rabin and Vazirani \cite{RabinV89}. This result was later significantly simplified by Harvey \cite{Harvey09}.
  
\subsection{Our Contribution}
In this paper, we put forth a new algorithm for solving the maximum $s$-$t$ flow and the minimum $s$-$t$ cut problems in directed graphs. More precisely, we develop an algorithm that  computes the maximum $s$-$t$ flow of an input graph in time  $\tO{m^{\frac{10}{7}}U^{\frac{1}{7}}}$, where $m$ denotes the number of arcs of that graph and $U$ its largest integer capacity. Known reductions imply similar running time bounds for the minimum $s$-$t$ cut problem as well as for the maximum-cardinality bipartite $\bb$-matching problem, a natural generalization of the maximum bipartite matching problem in which each vertex $v$ has a degree demand $b_v$. For that problem, our algorithm yields an $\tO{m^{\frac{10}{7}}B^{\frac{1}{7}}}$-time algorithm, with $B$ being the largest (integer) vertex demand. 

In the light of the above, for the unit-capacity/demand cases, the resulting algorithms match the performance of the algorithm of M\k{a}dry \cite{Madry13}. The latter algorithm, however, runs in $\tO{{mU}^{\frac{10}{7}}}$ time (which translates into an $\tO{(mB)^{\frac{10}{7}}}$ running time for the bipartite $\bb$-matching problem)  in the case of arbitrary capacities/demands. Consequently, the significantly better dependence of the running time of our algorithm on the largest capacity $U$/ largest demand $B$ makes it much more favorable in that setting. In fact, even though that dependence on $U$/$B$ is still polynomial it enables our algorithm to remain competitive, for a non-trivial range of parameters, with the best existing algorithms that run in time that is logarithmic in $U$/$B$, such as the $\tO{m\sqrt{n}\log U}$-time algorithm of Lee and Sidford \cite{LeeS14}.

Even more crucially, the key advantage of our algorithm is that it is significantly simpler than both the algorithm of M\k{a}dry \cite{Madry13} and that of Lee and Sidford \cite{LeeS14}). Both these algorithms rely heavily on the interior-point method framework. Specifically, \cite{Madry13} designed a certain new variant of path-following interior-point method algorithm for the near-perfect bipartite $\bb$-matching problem that encoded the input maximum $s$-$t$ flow instance. It then used  electrical flow computations to converge to the near-optimal solution for that problem. In order to break the bottlenecking $\tO{m^{\frac{1}{2}}}$ iteration bound, however, M\k{a}dry \cite{Madry13} needed to, first, develop an extensive toolkit for perturbing and preconditioning the underlying electrical flow computation and, then, to combine this machinery with a very careful and delicate analysis of the resulting dynamics. 

Our algorithm also relies on electrical flow computations but it abandons the above methodology and works instead fully within the classic augmenting path framework that almost all the previous combinatorial maximum $s$-$t$ flow algorithms used. In this framework, developed by Ford and Fulkerson \cite{FordF56} (see also \cite{EliasFS56}), the flow is built in stages. Each stage corresponds to finding a so-called augmenting flow in the current residual graph, which is a directed graph that encodes the solution found so far. The algorithm terminates when the residual graph admits no more augmenting flows, i.e., there is no path from $s$ to $t$ in it, since in this case the solution found so far has to be already optimal. 

The chief bottleneck in the running time analysis of augmenting path based algorithms is ensuring that each flow augmentation stage makes sufficient progress. Specifically, one wants to obtain a good trade off between the amount of flow pushed in each augmentation step and the time needed to implement each such flow push. One simple approach is to just use here $s$-$t$ path computations. This is a nearly-linear time procedure but it only guarantees pushing one unit of flow each time. A much more sophisticated primitive developed in this context are blocking flow computations. Combining this primitive with a simple duality argument enabled Goldberg and Rao \cite{GoldbergR98}, who built on the work of Even and Tarjan \cite{EvenT75}, to obtain an $O(m\min\{m^{\frac{1}{2}},n^{\frac{2}{3}}\}\log U)$-time maximum flow that remained the best known algorithm for nearly two decades. Unfortunately, trying to improve such blocking flow-based approaches turned out to be extremely difficult and no progress was made here so far, even in the unit-capacity case for which the best known bounds were established over 40 years ago.

One of the key contributions of this paper is bringing a new type of primitive: electrical flows to the augmenting path framework; and showing how to successfully use it to outperform the blocking flow-based methods. Specifically, our algorithm finds augmenting flows by computing electrical flows in certain symmetrization of the current residual graph -- see Section \ref{sec:simple_analysis} for more details. (Note that performing such a symmetrization is necessary as residual graphs are inherently directed while electrical flows are inherently undirected.) The key difficulty that arises in this context, however, is ensuring that this symmetrized residual graph can still support a significant fraction of the $s$-$t$ capacity of the original residual graph. It is not hard to see that, in general, this might not be the case. To address this problem we introduce a certain careful coupling of the primal and dual solutions, which is inspired by the so-called centrality condition arising in interior-point method based maximum flow algorithms (see \cite{Madry13}). We then show that maintaining this coupling and applying a simple preconditioning technique let us guarantee that looking only for the flows in the symmetrized version of the residual graph still provides sufficient progress in each iteration and, in particular, immediately delivers a $\tO{m^{\frac{3}{2}}\log U}$-time algorithm.

We then (see Section \ref{sec:improved_algorithm}) build on that basic algorithm and develop an $\ell_p$-geometric understanding of its running time analysis. This understanding guides us towards a simple electrical flow perturbation technique -- akin to the perturbation techniques used in \cite{ChristianoKMST11} and \cite{Madry13} -- that enables us to break the $\Omega(\sqrt{m})$ iterations bottleneck that all the blocking flow-based algorithms were suffering from, and thus get the final, improved result. 

We believe that further study of this new augmenting flow based framework will deliver even faster and simpler algorithms.

\subsection{Organization of the Paper}

We begin the technical part of the paper in Section \ref{sec:preliminaries} by presenting some preliminaries on the maximum flow problem and  the notion of electrical flows.  Then, in Section \ref{sec:basic_algorithm}, we present our framework and demonstrate how it yields an $\tO{m^{\frac{3}{2}}\log U}$-time maximum $s$-$t$ flow algorithm. Finally, in Section \ref{sec:improved_algorithm}, we show how to refine our basic framework to obtain the improved running time of $\tO{m^{\frac{10}{7}}U^{\frac{1}{7}}}$.
\section{Preliminaries}\label{sec:preliminaries}




Throughout this paper, we will be viewing graphs as having both lower and upper capacities. Specifically, we will denote by $G=(V,E,\uu)$ a directed graph with a vertex set $V$, an arc set $E$ (we allow parallel arcs), and two (non-negative) integer capacities $u_e^-$ and $u_e^+$, for each arc $e\in E$. (We will explain the role of these capacities below.) Usually, $m$ will denote the number $|E|$ of arcs of the graph in question and $n=|V|$ will be the number of its vertices. We view each arc $e$ of $G$ as an ordered pair $(u,v)$, where $u$ is its {\em tail} and $v$ is its {\em head}.

Observe that this perspective enables us to view undirected graphs as directed ones in which the ordered pair $(u,v)\in E$ is an (undirected) {\em edge} $(u,v)$ and the order just specifies the {\em orientation} of that edge (from $u$ to $v$).


\paragraph{Maximum Flow Problem.} 

The basic notion of this paper is the notion of a {\em flow}. Given a graph $G$, we view a flow in $G$ as a vector $\ff\in \bbR^m$ that assigns a value $f_e$ to each  arc $e$ of $G$. If this value is negative we interpret it as having a flow of $|f_e|$ flowing in the direction opposite to the arc orientation. (This convention is especially useful when discussing flows in undirected graphs.)

We say that a flow $\ff$ is an {\em $\vsigma$-flow}, for some {\em demands} $\vsigma\in \bbR^n$ iff it satisfies {\em flow conservation constraints} with respect to that demands. That is, we have that
\begin{equation}\label{eq:conservation_constraints}
\sum_{e\in E^+(v)} f_{e} - \sum_{e\in E^-(v)} f_{e} = \sigma_v, \quad \text{for each vertex $v\in V$}.
\end{equation} 
Here, $E^+(v)$ (resp. $E^-(v)$) is the set of arcs of $G$ that are entering (resp. leaving) vertex $v$. Intuitively, these constraints enforce that the net balance of the total in-flow into vertex $v$ and the total out-flow out of that vertex is equal to $\sigma_v$, for every $v\in V$.  (Observe that this implies, in particular, that $\sum_v \sigma_v =0$.)

Furthermore, we say that a $\vsigma$-flow $\ff$ is {\em feasible} in $G$ iff $\ff$ obeys the {\em the capacity constraints}:
\begin{equation}\label{eq:capacity_constraints}
-u_e^-\leq f_e \leq u_e^+, \quad \text{for each arc $e\in E$}.
\end{equation} 
In other words, we want each arc $e$ to have a flow that is at most $u_e^+$ if it flows in the direction of $e$'s orientation (i.e., $f_e\geq 0$), and at most $u_e^-$, if it flows in the opposite direction (i.e., $f_e<0$). Note that setting all $u_e^-$s be equal to zero recovers the standard notion of flow feasibility in directed graphs. 

 One type of flows that will be of special interest to us are $s$-$t$ flows, where $s$ (the {\em source}) and $t$ (the {\em sink}) are two distinguish vertices of $G$. Formally, an {\em $s$-$t$ flow} is a $\vsigma$-flow whose demand vector $\vsigma$ is equal to $F\cdot \vchi_{s,t}$, where $F\geq 0$ is called the {\em value} of $\ff$ and $\vchi_{s,t}$ is a demand vector that has $-1$ (resp. $1$) at the coordinate corresponding to $s$ (resp. $t$) and zeros everywhere else.

Now, the {\em maximum flow problem} corresponds to a task in which we are given a (directed) graph $G=(V,E, \uu)$ with integer capacities as well as a source vertex $s$ and a sink vertex $t$ and want to find a {\em feasible} (in the sense of \eqref{eq:capacity_constraints}) $s$-$t$ flow of maximum value. We will denote this maximum value as $F^*$.

\paragraph{Residual Graphs.} A fundamental object in many maximum flow algorithms (including ours) is the notion of a residual graph. Given a graph $G=(V,E,\uu)$ and a feasible $\vsigma$-flow $\ff$ in that graph (it is useful to think $\vsigma=F\cdot \vchi_{s,t}$), we define the {\em residual graph $G_{\ff}$} (of $G$ with respect to $\ff$) as a graph $G_{\ff}=(V,E,\huu(\ff))$ over the same vertex and arc set as $G$ and such that, for each arc $e=(u,v)$ of $G$, its lower and upper capacities are defined as
\begin{equation}
\label{eq:capacities_residual_graph}
\hu_e^+(\ff):=u_e^+-f_e \text{\ \  and \ \ } \hu_e^-(\ff):=u_e^-+f_e.
\end{equation}
We will refer to $\hu_e^+(\ff)$ (resp. $\hu_e^-(\ff)$) as {\em forward residual capacity} (resp. {\em backward residual capacity}) of $e$ and also define the {\em residual capacity} $\hu_e(\ff)$ of $e$ as the minimum of these two, i.e., $\hu_e(\ff):=\min\{\hu_e^-(\ff),\hu_e^+(\ff)\}$.  Note that the value of residual capacity depends on the flow $\ff$ but we will ensure that it is always clear from the context with respect to which flow the residual capacity is measured. Also, observe that feasibility of $\ff$ implies that all residual capacities are always non-negative (cf. \eqref{eq:capacity_constraints}).

The main reason why residual graphs are useful in computing maximum flows is that they constitute a very convenient representation of the progress made so far. Specifically, we have the following important fact. (Again, it is useful to think here of the maximum $s$-$t$ flow setting, in which  $\vsigma=F^*\vchi_{s,t}$.)

\begin{fact}
\label{fa:residual_graph}
Let $\vsigma$ be some demand and $G=(V,E,\uu)$ be a graph in which a demand of $\vsigma$ can be routed, i.e., there exists a $\vsigma$-flow $\ff^*$ that is feasible in $G$. Also, for any $0\leq \alpha \leq 1$, let $\ff$ be a feasible $\alpha\vsigma$-flow in $G$, and $G_{\ff}=(V,E,\huu(\ff))$ be the residual graph of $G$ with respect to $\ff$. We have that
\begin{enumerate}[(a)]
\item one can route a demand of $(1-\alpha)\vsigma$ in $G_{\ff}$;
\item if $\ff'$ is a feasible $\alpha'\vsigma$-flow in $G_{\ff}$, for some $\alpha'$, then $\ff+\ff'$ is a feasible $(\alpha+\alpha')\vsigma$-flow in $G$.
\end{enumerate}
\end{fact}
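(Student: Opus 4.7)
The plan is to prove both parts of the fact by a direct, essentially computational argument based on the definition \eqref{eq:capacities_residual_graph} of the residual capacities. The unifying idea is that flow conservation is linear in the flow vector, so sums and differences of flows behave correctly on the demand side, while feasibility in $G$ and feasibility in $G_{\ff}$ are equivalent under the appropriate shift by $\ff$. There is no genuine obstacle here; the only thing to be careful about is handling the signed (two-sided) capacity constraints of \eqref{eq:capacity_constraints} in both directions simultaneously.

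For part (a), I would start from the feasible $\vsigma$-flow $\ff^*$ guaranteed by the hypothesis and consider the candidate flow $\ff^* - \ff$ on $G_{\ff}$. Flow conservation is immediate: subtracting \eqref{eq:conservation_constraints} for $\ff$ (with demand $\alpha\vsigma$) from \eqref{eq:conservation_constraints} for $\ff^*$ (with demand $\vsigma$) yields that $\ff^* - \ff$ is a $(1-\alpha)\vsigma$-flow. To verify feasibility in $G_{\ff}$, I would check for each arc $e$ that $-\hu_e^-(\ff) \leq f^*_e - f_e \leq \hu_e^+(\ff)$. By \eqref{eq:capacities_residual_graph} this is exactly
\[
-(u_e^- + f_e) \;\leq\; f^*_e - f_e \;\leq\; u_e^+ - f_e,
\]
which after adding $f_e$ to all sides collapses to $-u_e^- \leq f^*_e \leq u_e^+$, i.e., feasibility of $\ff^*$ in $G$, which holds by assumption.

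For part (b), the argument runs in the opposite direction. Given a feasible $\alpha'\vsigma$-flow $\ff'$ in $G_{\ff}$, I would first observe that $\ff + \ff'$ is an $(\alpha + \alpha')\vsigma$-flow by adding the flow-conservation identities \eqref{eq:conservation_constraints} coordinatewise. For feasibility in $G$, I would check that for every arc $e$,
\[
-u_e^- \;\leq\; f_e + f'_e \;\leq\; u_e^+.
\]
Feasibility of $\ff'$ in $G_{\ff}$ means $-\hu_e^-(\ff) \leq f'_e \leq \hu_e^+(\ff)$, which by \eqref{eq:capacities_residual_graph} is $-(u_e^- + f_e) \leq f'_e \leq u_e^+ - f_e$; adding $f_e$ to all sides gives precisely the required bounds. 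This also shows that the residual capacities were automatically non-negative, as noted earlier, so $\ff'$ taking values in $[-\hu_e^-(\ff),\hu_e^+(\ff)]$ is a well-posed constraint.

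In short, both claims reduce to the same one-line manipulation of \eqref{eq:capacities_residual_graph}, which can be read as saying that translating the feasibility window for arc $e$ by $-f_e$ turns the constraint $-u_e^- \leq x \leq u_e^+$ in $G$ into the constraint $-\hu_e^-(\ff) \leq x - f_e \leq \hu_e^+(\ff)$ in $G_{\ff}$. Part (a) applies this bijection from $G$ to $G_{\ff}$ (to the witness flow $\ff^*$), and part (b) applies its inverse from $G_{\ff}$ to $G$ (to the augmenting flow $\ff'$), with flow conservation handled for free by linearity.
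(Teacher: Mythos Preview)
Your argument is correct: linearity of the flow-conservation constraints handles the demand side, and the one-line translation of the capacity window via \eqref{eq:capacities_residual_graph} handles feasibility in both directions. The paper states Fact~\ref{fa:residual_graph} without proof (it is a standard property of residual graphs), so there is nothing to compare against; what you have written is exactly the routine verification one would expect.
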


Intuitively, the above fact enables us to reduce the task of routing a demand $\vsigma$ in $G$ to a sequence of computations of augmenting $\alpha'\vsigma$-flows in the residual graph $G_{\ff}$. We know that as long as we have not yet computed a feasible $\vsigma$-flow in $G$, $G_{\ff}$ can route a demand of $(1-\alpha)\vsigma$-flow, where $(1-\alpha)>0$ is the fraction of routed demand that we are still ``missing'', and each new augmenting $\alpha'\vsigma$-flow found in $G_{\ff}$ brings us closer to routing $\vsigma$ in full in $G$. (Note that initially $G_{\ff}$ is equal to $G$ and $G_{\ff}$ is changing after each new augmenting $\alpha'\vsigma$-flow is found.)  

\paragraph{Electrical Flows and Vertex Potentials.}

Another notion that will play a fundamental role in this paper is the notion of electrical flows. Here, we just briefly review some of the key properties that we will need later. For an in-depth treatment we refer the reader to \cite{Bollobas98}. 

Consider a graph $G$ and a vector of resistances $\rr\in \bbR^{m}$ that assigns to each edge $e$ its {\em resistance} $r_e>0$. For a given $\vsigma$-flow $\ff$ in $G$, let us define its {\em energy} (with respect to resistances $\rr$) $\energy{\rr}{\ff}$ to be
\begin{equation}\label{eq:def_energy_flow}
\energy{\rr}{\ff}:= \sum_e r_e f_e^2.
\end{equation}

For a given demand vector $\vsigma$ and a vector of resistances $\rr$, we define the {\em electrical $\vsigma$-flow} in $G$ (that is {\em determined} by resistances $\rr$) to be the flow that minimizes the energy $\energy{\rr}{\ff}$ among all flows with demand $\vsigma$ in $G$. As energy is a strictly convex function, one can easily see that such a flow is unique. (It is important to keep in mind that such flow is \emph{not} required to be feasible with respect to capacities of $G$, in the sense of \eqref{eq:capacity_constraints}.)

A very useful property of electrical flows is that they can be characterized in terms of vertex potentials inducing them. Namely, one can show that a flow $\ff$ with demands $\vsigma$  in $G$ is an electrical $\vsigma$-flow determined by resistances $\rr$ iff there exist {\em vertex potentials} $\phi_v$ (that we collect into a vector $\vphi\in \bbR^n$) such that, for any edge $e=(u,v)$ in $G$,
\begin{equation}\label{eq:Ohms_lawy}
f_e = \frac{\phi_v-\phi_u}{r_e}.
\end{equation}
In other words, a $\ff$ with demands $\vsigma$ is an electrical $\vsigma$-flow iff it is {\em induced} via \eqref{eq:Ohms_lawy} by some vertex potential $\vphi$. (Note that the orientation of edges matters in this definition.) The above equation corresponds to the Ohm's law known from physics.

Note that we are able to express the energy $\energy{\rr}{\ff}$ (see \eqref{eq:def_energy_flow}) of an electrical $\vsigma$-flow $\ff$ in terms of the potentials $\vphi$ inducing it as
\begin{equation}\label{eq:def_energy_potentials}
\energy{\rr}{\ff}= \sum_{e=(u,v)} \frac{(\phi_v-\phi_u)^2}{r_e}.
\end{equation}

One of the consequences of the above is that one can develop a dual characterization of the energy of an electrical $\vsigma$-flow in terms of optimization over vertex potentials. Namely, we have the following lemma whose proof can be found, e.g., in \cite{Madry13} Lemma 2.1.


\begin{lemma}\label{lem:effective_conductance}
For any graph $G=(V,E)$, any vector of resistances $\rr$, and any demand vector $\vsigma$, 
\[
\frac{1}{\energy{\rr}{\ff^*}} = \min_{\vphi| \vsigma^T \vphi=1} \sum_{e=(u,v)\in E} \frac{(\phi_v-\phi_u)^2}{r_{e}},
\]
where $\ff^*$ is the electrical $\vsigma$-flow determined by $\rr$ in $G$. Furthermore, if $\vphi^*$ are the vertex potentials corresponding to $\ff^*$ then the minimum is attained by taking $\vphi$ to be equal to $\hvphi:=\vphi^*/\energy{\rr}{\ff^*}$.
\end{lemma}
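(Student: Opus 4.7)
The plan is to reduce the claim to a single application of the Cauchy--Schwarz inequality, invoking Ohm's law only to certify equality. The key preliminary is the identity
$$\vsigma^T \vphi \;=\; \sum_{e=(u,v) \in E} f_e\,(\phi_v - \phi_u),$$
valid for any $\vsigma$-flow $\ff$ and any potentials $\vphi \in \bbR^n$. I would derive this by multiplying each flow conservation constraint \eqref{eq:conservation_constraints} by $\phi_v$, summing over $v \in V$, and regrouping the resulting double sum by arcs: each arc $e=(u,v)$ contributes $+f_e \phi_v$ from the head's balance equation and $-f_e \phi_u$ from the tail's.

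With this identity in hand, the ``$\ge$'' direction is immediate. Applying the identity to $\ff^*$ (which is a $\vsigma$-flow) and to any $\vphi$ with $\vsigma^T \vphi = 1$, and then using Cauchy--Schwarz on the factorization $f^*_e (\phi_v-\phi_u) = (\sqrt{r_e}\, f^*_e)\cdot ((\phi_v-\phi_u)/\sqrt{r_e})$,
$$1 \;=\; \left( \sum_e f^*_e\,(\phi_v - \phi_u) \right)^{\!2} \;\le\; \left(\sum_e r_e (f^*_e)^2\right)\!\left(\sum_{e=(u,v)} \frac{(\phi_v - \phi_u)^2}{r_e}\right) \;=\; \energy{\rr}{\ff^*} \cdot \sum_{e=(u,v)} \frac{(\phi_v - \phi_u)^2}{r_e},$$
which rearranges to $\sum_{e=(u,v)}(\phi_v-\phi_u)^2/r_e \ge 1/\energy{\rr}{\ff^*}$ for every admissible $\vphi$.

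For the matching upper bound and for the explicit minimizer, I would plug in $\hvphi := \vphi^*/\energy{\rr}{\ff^*}$. By Ohm's law \eqref{eq:Ohms_lawy}, $f^*_e = (\phi^*_v - \phi^*_u)/r_e$, so substituting into the identity yields $\vsigma^T \vphi^* = \sum_e r_e (f^*_e)^2 = \energy{\rr}{\ff^*}$; hence $\vsigma^T \hvphi = 1$ and $\hvphi$ is admissible. A direct calculation then gives
$$\sum_{e=(u,v)} \frac{(\hat\phi_v - \hat\phi_u)^2}{r_e} \;=\; \energy{\rr}{\ff^*}^{-2} \sum_e \frac{(\phi^*_v - \phi^*_u)^2}{r_e} \;=\; \energy{\rr}{\ff^*}^{-2} \sum_e r_e (f^*_e)^2 \;=\; \frac{1}{\energy{\rr}{\ff^*}},$$
which matches the lower bound and identifies $\hvphi$ as the minimizer. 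Equivalently, Ohm's law forces the two vectors $\{\sqrt{r_e}\,f^*_e\}_e$ and $\{(\hat\phi_v-\hat\phi_u)/\sqrt{r_e}\}_e$ to be collinear, which is precisely the equality case of Cauchy--Schwarz.

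I do not anticipate a significant obstacle. The only step requiring care is the sign bookkeeping in the flow--potential identity, so that it is consistent with the paper's convention that $E^+(v)$ consists of the arcs entering $v$ and $\sigma_v$ is the net in-flow at $v$; once that identity is in place, the rest is forced by Cauchy--Schwarz and Ohm's law.
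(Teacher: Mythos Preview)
Your argument is correct. The identity $\vsigma^T\vphi=\sum_{e=(u,v)}f_e(\phi_v-\phi_u)$, the Cauchy--Schwarz lower bound, and the verification that $\hvphi=\vphi^*/\energy{\rr}{\ff^*}$ is feasible and attains equality are all sound; the sign conventions you flag do match the paper's.

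Note, however, that the paper does not actually supply its own proof of this lemma: it states the result and defers to \cite{Madry13}, Lemma 2.1. So there is no in-paper argument to compare against. Your Cauchy--Schwarz route is the standard way to establish this Thomson/Dirichlet-type duality and would be entirely appropriate here.
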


Note that the above lemma provides a convenient way of lowerbounding the energy of an electrical $\vsigma$-flow. One just needs to expose any vertex potentials $\vphi$ such that $\vsigma^T \vphi=1$ and this will immediately constitute an energy lowerbound. 

%

\paragraph{Laplacian Solvers.} The fact that the electrical $\vsigma$-flow determined by resistances $\rr$ is the only flow with demands $\vsigma$ that can be induced by vertex potentials (cf. \eqref{eq:Ohms_lawy}) has an important consequence. It enables us to reduce electrical $\vsigma$-flow computations to solving a linear system. In fact, the task of finding vertex potentials that induce that flow can be cast as a {\em Laplacian} linear system. That is, a linear system in which the constraint matrix corresponds to a Laplacian of the underlying graph with weights given by the (inverses of) the resistances $\rr$.

Now, from the algorithmic point of view, the crucial property of Laplacian systems is that we can solve them, up to a very good approximation, very efficiently. Namely, there is a long line of work \cite{SpielmanT04,KoutisMP10,KoutisMP11,KelnerOSZ13,CohenKMPPRX14,KyngLPSS16,KyngS16} that builds on an earlier work of Vaidya \cite{Vaidya90} and Spielman and Teng \cite{SpielmanT03} and gives us a number of Laplacian system solvers that run in only nearly-linear time and, in case of more recent variants, are conceptually fairly simple. In particular, this line of work establishes the following theorem.

\begin{theorem}\label{thm:vanilla_SDD_solver}
For any $\eps>0$, any graph $G$ with $n$ vertices and $m$ edges, any demand vector $\vsigma$, and any resistances $\rr$, one can compute in $\tO{m \log \eps^{-1}}$ time vertex potentials $\tvphi$ such that
$\|\tvphi-\vphi^*\|_{\LL}\leq \eps \|\vphi^*\|_{\LL}$, where $\LL$ is the Laplacian of $G$, $\vphi^*$ are potentials inducing the electrical $\vsigma$-flow determined by resistances $\rr$, and $\|\vphi\|_{\LL}:=\sqrt{\vphi^T \LL \vphi}$.
\end{theorem}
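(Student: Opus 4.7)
The plan is to reduce the problem of computing vertex potentials that induce the electrical $\vsigma$-flow to solving a single Laplacian linear system, and then invoke the nearly-linear-time Laplacian solvers established in the cited line of work \cite{SpielmanT03,KoutisMP10,KoutisMP11,KelnerOSZ13,CohenKMPPRX14,KyngLPSS16,KyngS16}. First, by Ohm's law \eqref{eq:Ohms_lawy}, potentials $\vphi^*$ induce the electrical $\vsigma$-flow determined by $\rr$ if and only if the induced flow also satisfies the conservation constraints \eqref{eq:conservation_constraints}. Letting $\BB \in \bbR^{m\times n}$ be the signed edge-vertex incidence matrix of $G$ and $\WW := \mathrm{diag}(1/r_e)$, the Ohm's-law relation is $\ff^* = \WW \BB \vphi^*$, while conservation reads $\BB^T \ff^* = \vsigma$. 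Combining them yields the characterization $\LL \vphi^* = \vsigma$, where $\LL := \BB^T \WW \BB$ is precisely the weighted Laplacian of $G$ with edge weights $1/r_e$. Thus the task reduces to approximately solving the Laplacian system $\LL \vphi = \vsigma$.

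Second, I would invoke any of the solvers from the cited list. Each of them, given $\LL$, a right-hand side $\vsigma$ lying in the image of $\LL$ (which holds since $\onev^T \vsigma = 0$ for any demand vector), and an accuracy parameter $\eps$, returns in $\tO{m \log \eps^{-1}}$ time a vector $\tvphi$ satisfying exactly the promised bound $\|\tvphi - \vphi^*\|_\LL \le \eps \|\vphi^*\|_\LL$, which is the conclusion of the theorem. Note that the true solution is only unique up to adding a multiple of $\onev \in \ker \LL$, but because the $\LL$-norm vanishes on $\onev$ the error bound is well-posed regardless of which representative is returned.

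The genuinely hard part, which is what the cited works establish and which we treat as a black box, is actually building such a solver. The standard recipe combines (i) a good spectral preconditioner for $\LL$, typically a low-stretch spanning tree augmented with a small number of off-tree edges sampled by effective resistance, or, in more recent constructions, a sparsified Cholesky / ultra-sparsifier scheme; (ii) a preconditioned iterative method such as preconditioned Chebyshev or Richardson iteration, whose iteration count depends only polylogarithmically on the relative condition number between $\LL$ and the preconditioner; and (iii) recursion on the much smaller Schur complements that arise after partial elimination. The chief obstacle overcome in each of the cited papers is designing preconditioners whose construction \emph{and} per-iteration application both fit into an $\tO{m}$ budget while simultaneously keeping the relative condition number low enough to enable polylogarithmic iteration counts. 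Since this machinery is orthogonal to the contributions of the present paper, we invoke the resulting solver as a stated guarantee rather than rederiving it.
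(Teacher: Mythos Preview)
Your proposal is correct and matches the paper's own treatment: the paper does not prove this theorem but simply states it as a consequence of the cited line of work on nearly-linear-time Laplacian solvers, treating the result as a black box. Your write-up adds the explicit reduction $\LL\vphi^*=\vsigma$ via $\BB$ and $\WW$ and a sketch of the solver internals, but this is just an elaboration of what the paper summarizes in the prose preceding the theorem; the approach is the same.
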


Even though the solutions delivered by the above Laplacian solvers are only approximate, the quality of approximation that it delivers is more than sufficient for our purposes. Therefore, in the rest of this paper we assume that these solutions are exact. (See, e.g., \cite{Madry13} for discussion how to deal with inexactness of the electrical flows computed.)

\paragraph{Vector Norms.} We will find it useful to work with various $\ell_p$-norms of vectors. To this end, for any $p>0$, we define the {\em $\ell_p$-norm} $\norm{\hh}{p}$ of a vector $\hh$ as 
\begin{equation}
\label{eq:l_p_norms}
\norm{\hh}{p}:=\left(\sum_i |h_i|^p\right)^{\frac{1}{p}}.
\end{equation}
In particular, the $\ell_{\infty}$-norm is given by $\inorm{\hh}:=\max_i |h_i|$. 


\section{Augmenting Residual Graphs with Electrical Flows}\label{sec:basic_algorithm}

In this section, we put forth the general framework we will use to solve the maximum $s$-$t$ problem.\footnote{Observe that our treatment can be immediately extended to routing arbitrary demands since this problem can always be reduced to the directed maximum flow problem by adding a super-source and super-sink and connecting all the surpluses in the original graph to the former and all the deficits in that graph to the latter via arcs with appropriate capacities.} In particular, we demonstrate how this framework enables us to solve the maximum $s$-$t$ flow problem in $\tO{m^{\frac{3}{2}}\log U}$ time, where $m=|E|$ is the number of arcs in of the input graph and $U$ is its largest (integer) capacity. 

More precisely, for any maximum $s$-$t$ flow instance $G=(V,E,\uu)$ and any value $F\geq 0$, our algorithm will work in $\tO{m^{\frac{3}{2}}\log U}$ time and either: compute a feasible $s$-$t$ flow of value $F$ in $G$; or conclude that the maximum $s$-$t$ flow value $F^*$ of $G$ is strictly smaller than $F$. 

Note that such procedure can be turned into a ``classic'' maximum $s$-$t$ flow by applying binary search over values of $F$ and incurring a multiplicative running time overhead of only $O(\log Un)$. (In fact, a standard use of capacity scaling  technique \cite{EdmondsK72} enables one to keep the overall running time of the resulting algorithm be only linear, instead of quadratic, in $\log U$.)

%
%

As mentioned earlier, our algorithm follows the primal dual augmenting paths based framework. At a high level, in each iteration (see Section \ref{sec:progress_steps}), it uses electrical flow computations to compute an augmenting flow as well as an update to the dual solution. To ensure that each augmenting iteration makes enough progress, it maintains a careful coupling of the primal and dual solution. We describe it below.

\subsection{Primal Dual Coupling}\label{sec:primal_dual_coupling}

Let us fix our target flow value $F$ and, for notational convenience, , for any $0\leq \alpha \leq 1$, let us denote by $\vchi_{\alpha}$ the demand vector $\alpha F\vchi_{s,t}$, i.e., the demand corresponding to routing $\alpha$-fraction of the target flow value $F$ of the $s$-$t$ flow. Also, let us define $\vchi$ to be the demand equal to $\vchi_1$.

Again, our algorithm will be inherently primal dual in nature. That is, in addition to maintaining a primal solution: a $\vchi_\alpha$-flow $\ff$, for some $0\leq \alpha\leq 1$, that is feasible in $G$, it will also maintain a dual solution $\yy\in \bbR^{n}$, which should be viewed as an embedding of all the vertices of the graph $G$ into a line. 

Consequently, our goal will be to either to make $\ff$ be a feasible flow with demands $\vchi_{\alpha}$ and $\alpha=1$, which corresponds to routing the target $s$-$t$ flow value $F$ in full, or to make the dual solution $\yy$ certify that the target demand $\vchi_1=\vchi$ cannot be fully routed in $G$ and thus $F>F^*$. 

\paragraph{Well-coupled Solutions.} Our primal dual scheme will be enforcing a very specific coupling of the primal solution $\ff$ and the dual solution $\yy$. More precisely, let us define for each arc $e=(u,v)$ 
\begin{equation}\label{eq:def_delta_y}
\Delta_e( \yy):=y_v-y_u,
\end{equation}
to be the ``stretch`` of the arc $e$ in the embedding given by $\yy$. Also, let $G_{\ff}$ be the residual graph of $G$ with respect to $\ff$ and let us define, for a given arc $e$, a potential function
\begin{equation}\label{eq:def_phi_e}
\Phi_e(\ff):=\frac{1}{\hu_e^+(\ff)}-\frac{1}{\hu_e^-(\ff)},
\end{equation}
where we recall that $\hu_e^+(\ff):=u_e^+-f_e$ (resp. $\hu_e^-(\ff):=u_e^-+f_e$) are forward (resp. backward) residual capacities of the arc $e$. (See preliminaries, i.e., Section \ref{sec:preliminaries}, for details.)
 
 Then, our intention is to maintain the following relation between $\ff$ and $\yy$:
\begin{equation}\label{eq:ideal_coupling}
\Delta_e( \yy)= \Phi_e(\ff) \text{ \ \ \ for each arc $e$}.
\end{equation}

Intuitively, this condition ensures that the line embedding $\yy$ stretches each arc $e$ in the direction of the smaller of the residual capacities, and that this stretch is inversely proportional to the value of that capacity. (Note that if $G$ was undirected and thus the initial capacities $u_e^+$ and $u_e^-$ were equal, the direction of smaller residual capacity is also the direction in which the flow $\ff$ flows through the arc $e$.) It is worth pointing out that this coupling condition is directly inspired by (and, in fact, can be directly derived from) a certain variant of centrality condition used by interior-point method based maximum flow algorithms (see \cite{Madry13}). 

Even though condition \eqref{eq:ideal_coupling} expresses our intended coupling, it will be more convenient to work with a slightly relaxed condition that allows us to have small violations of that ideal coupling. Specifically, we say that a primal dual solution $(\ff, \yy)$ is {\em $\vgamma$-coupled} iff
\begin{equation}\label{eq:coupling}
\left|\Delta_e( \yy)-\Phi_e(\ff)\right| \leq \frac{\gamma_e}{\hu_e(\ff)}\text{ \ \ \  for all arcs $e=(u,v)$}.
\end{equation}
Here, $\hu_e(\ff)=\min\{\hu_e^+(\ff),\hu_e^-(\ff)\}$ is the (symmetrized) residual capacity of the arc $e$, and $\vgamma \in \bbR^m$ is the {\em violation vector} that we intend to keep very small. In particular, we say that a primal dual solution $(\ff,\yy)$ is {\em well-coupled} iff its violation vector $\vgamma$ has its $\ell_2$-norm (see \eqref{eq:l_p_norms}) be at most $\frac{1}{100}$, i.e., $\|\vgamma\|_2\leq \frac{1}{100}$.

One of the key consequences of maintaining a well-coupled primal dual pair of solutions $(\ff,\yy)$  is that it enables us to use $\yy$ as a dual certificate for inability to route certain demands in $G$. The following lemma gives us a concrete criterion for doing that.

\begin{lemma}\label{lem:duality_certificate}
Let $(\ff,\yy)$ be a well-coupled primal dual solution with $\ff$ being a $\vchi_\alpha$-flow, for some $0\leq \alpha <1$. If 
\[
\vchi^T\yy > \frac{2m}{(1-\alpha)}
\]
then the demand $\vchi$ cannot be routed in $G$, i.e., $F>F^*$.
\end{lemma}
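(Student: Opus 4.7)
The plan is to argue by contradiction: suppose $F\le F^*$; I will show that then $\vchi^T\yy \le 2m/(1-\alpha)$, contradicting the hypothesis. Since $\vchi$ is routable in $G$, Fact~\ref{fa:residual_graph}(a) supplies a feasible $(1-\alpha)\vchi$-flow $\ff^*$ in the residual graph $G_{\ff}$. The standard flow-potential identity---expand $\sum_v \sigma_v y_v$ and regroup arc by arc using flow conservation---then reads $(1-\alpha)\vchi^T\yy = \sum_{e} f^*_e \Delta_e(\yy)$, so it suffices to show this sum is at most $2m$.

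My next step is to apply the coupling \eqref{eq:coupling} one arc at a time, choosing the one-sided bound that matches the sign of $f^*_e$ (namely $\Delta_e\le \Phi_e+\gamma_e/\hu_e$ if $f^*_e\ge 0$ and $\Delta_e\ge\Phi_e-\gamma_e/\hu_e$ if $f^*_e<0$). This yields the arc-wise bound $f^*_e \Delta_e(\yy) \le f^*_e\Phi_e(\ff) + |f^*_e|\gamma_e/\hu_e(\ff)$. The crux of the argument is to show that this right-hand side is at most $1+\gamma_e$ on every arc. Expanding $\Phi_e = 1/\hu_e^+ - 1/\hu_e^-$ and taking $f^*_e\ge 0$ (the other sign is symmetric under $+\leftrightarrow-$ using $|f^*_e|\le \hu_e^-$), the expression becomes $f^*_e/\hu_e^+ - f^*_e/\hu_e^- + f^*_e\gamma_e/\hu_e$. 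I then split into two cases according to which of $\hu_e^+$ or $\hu_e^-$ attains the minimum $\hu_e$: if $\hu_e=\hu_e^+$, collecting the $\hu_e^+$ terms gives $(1+\gamma_e)f^*_e/\hu_e^+\le 1+\gamma_e$ by feasibility $f^*_e\le \hu_e^+$, while the remaining $-f^*_e/\hu_e^-$ is nonpositive; if $\hu_e=\hu_e^-$, the combined $\hu_e^-$ terms become $-(1-\gamma_e)f^*_e/\hu_e^-\le 0$ since $\gamma_e\le \|\vgamma\|_\infty\le \|\vgamma\|_2 \le 1/100<1$, leaving only the harmless $f^*_e/\hu_e^+\le 1$.

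The main obstacle to watch for is precisely this second case, $\hu_e = \hu_e^- < \hu_e^+$ with $f^*_e>0$: on such ``bad'' arcs $|f^*_e|$ can exceed $\hu_e$ by as much as the a~priori unbounded ratio $\hu_e^+/\hu_e^-$, so any route that first estimates $|f^*_e|/\hu_e$ edge-by-edge (say via Cauchy-Schwarz against $\vgamma$) loses outright. The saving grace is that on exactly these arcs $\Phi_e$ itself is large and of opposite sign to $f^*_e$, and the $-f^*_e/\hu_e^-$ contribution from $\Phi_e$ absorbs the error term $+f^*_e\gamma_e/\hu_e^-$ as soon as $\gamma_e<1$. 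Summing the per-arc bound and applying Cauchy-Schwarz to the $\gamma_e$'s, I get $(1-\alpha)\vchi^T\yy \le m + \|\vgamma\|_1 \le m+\sqrt{m}\,\|\vgamma\|_2 \le m+\sqrt{m}/100 \le 2m$, which contradicts the hypothesis and completes the proof.
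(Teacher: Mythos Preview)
Your proof is correct and follows essentially the same route as the paper: both argue by contradiction, invoke Fact~\ref{fa:residual_graph} to obtain a feasible $(1-\alpha)\vchi$-flow in $G_{\ff}$, rewrite $(1-\alpha)\vchi^T\yy$ as $\sum_e f^*_e\Delta_e(\yy)$ via flow conservation, and then bound each summand by $1+\gamma_e$ using the coupling condition together with the same two-case split on whether $\hu_e=\hu_e^+$ or $\hu_e=\hu_e^-$. Your explicit identification of the ``bad'' case $\hu_e=\hu_e^-<\hu_e^+$ with $f^*_e>0$, and the observation that the sign of $\Phi_e$ is what absorbs the potentially unbounded error term there, is exactly the mechanism the paper uses as well.
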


Note that the choice of the constant $2$ in the above lemma is fairly arbitrary. In principle, any constant strictly larger than $1$ would suffice.

%

\begin{proof}
Assume for the sake of contradiction that $\vchi^T\yy > \frac{2m}{(1-\alpha)}$ but the demand $\vchi$ can still be routed in $G$. As $\ff$ is an $\vchi_{\alpha}$-flow that is feasible in $G$, by Fact \ref{fa:residual_graph}, there has to exist a $\vchi_{(1-\alpha)}$-flow $\ff'$ that is feasible in the residual graph $G_{\ff}$. Observe that, by \eqref{eq:conservation_constraints} and \eqref{eq:def_delta_y}, we have in that case that
\begin{eqnarray}\label{eq:ff_prime_inner}
(\ff')^T \Delta(\yy) & = & \sum_{e} f_e' \Delta_e( \yy)= \sum_{e=(u,v)} f_e' \left(y_v-y_u\right) = \sum_{v} y_v \left(\sum_{e\in E^+(v)} f_e' - \sum_{e\in E^-(v)} f_e'\right)\\
&= & (1-\alpha) F(y_t-y_s) = (1-\alpha) \vchi^T \yy > 2m.\nonumber
\end{eqnarray}

We want now to use the feasibility of $\ff'$ to derive an upper bound on the inner produce $(\ff')^T \Delta(\yy)$ that will violate the above lower bound and thus deliver the desired contradiction. 

To this end, let us consider some particular arc $e$. Assume wlog that $f_e'\geq 0$ (handling of the case of $f_e'<0$ is completely analogous). We have that the contribution of that arc to the inner product $(\ff')^T\Delta(\yy)$ is, by \eqref{eq:coupling}, at most
\[
f_e'\Delta_e(\yy) \leq f_e' \Phi_e(\ff)+\frac{\gamma_e f_e'}{\hu_e(\ff)} \leq f_e' \left(\frac{1}{\hu_e^+(\ff)}-\frac{1}{\hu_e^-(\ff)}\right)+\frac{\gamma_e f_e'}{\hu_e(\ff)},
\]
where we recall that $\vgamma$ is the violation vector of $(\ff, \yy)$ (cf. \eqref{eq:coupling}).

As $\ff'$ is feasible in $G_{\ff}$, we need to have that $f_e'\leq \hu_e^+(\ff)$. So, if $\hu_e^+(\ff)\leq \hu_e^-(\ff)$, i.e., $\hu_{e}^+(\ff)=\hu_e(\ff)$, then 
\[
f_e'\Delta_e(\yy) \leq f_e' \left(\frac{1}{\hu_e^+(\ff)}-\frac{1}{\hu_e^-(\ff)}\right)+\frac{\gamma_e f_e'}{\hu_e(\ff)} \leq \frac{f_e'(1+\gamma_e)}{\hu_e(\ff)}\leq (1+\gamma_e).
\]
Otherwise, we have that
\[
f_e'\Delta_e(\yy) \leq f_e' \left(\frac{1}{\hu_e^+(\ff)}-\frac{1}{\hu_e^-(\ff)}\right)+\frac{\gamma_e f_e'}{\hu_e(\ff)} \leq f_e' \left(\frac{1}{\hu_e^+(\ff)}-\frac{1-\gamma_e}{\hu_e(\ff)}\right)\leq 1-\frac{(1-\gamma_e)f_e'}{\hu_e(\ff)}\leq 1,
\]
as $\gamma_e\leq \inorm{\vgamma} \leq \norm{\vgamma}{2}\leq \frac{1}{2}$.  So, in either case, this contribution is at most $(1+\gamma_e)$. Summing these contributions over all arcs we get that
\[
(\ff')^T\Delta(\yy) \leq \sum_{e} f_e' \Delta_e( \yy) \leq \sum_e (1+\gamma_e)\leq m + \sqrt{m} \norm{\vgamma}{2} \leq 2 m,
\]
which indeed contradicts the lower bound \eqref{eq:ff_prime_inner}. The lemma follows.
\end{proof}

In the light of the above discussion, for a given well-coupled primal dual solution $(\ff,\yy)$, we should view the value of $\alpha$ as a measure of our primal progress, while the value of the inner product $\vsigma^T\yy$ can be seen as a measure of our dual progress.

\paragraph{Initialization.} The coupling condition \eqref{eq:coupling} ties the primal and dual solutions $\ff$ and $\yy$ fairly tightly. In fact, coming up with some primal dual solutions that are well-coupled, which we need to initialize our framework,  might be difficult.

Fortunately, finding such a pair of initial well-coupled solutions turns out to be easy, if our input graph $G$ is undirected. In that case,  just taking $\ff$ to be a trivial zero flow and $\yy$ to be a trivial all-zeros embedding makes condition \eqref{eq:coupling} satisfied (with $\gamma_e$s being all zero). After all, the residual graph $G_{\ff}$ with respect to such zero flow $\ff$ is just the graph $G$ itself and thus $\hu_e^+(\ff)=u_e^+=u_e^-=\hu_e^-(\ff)$. 

Furthermore, even though we are interested in solving directed instances too, every such instance can be reduced to an undirected one. Specifically, we have the following lemma.

\begin{lemma}\label{lem:initialization}
Let $G$ be an instance of the maximum $s$-$t$ flow problem with $m$ arcs and the maximum capacity $U$, and let $F$ be the corresponding target flow value. In $\tO{m}$ time, one can construct an instance $G'$ of  undirected maximum $s$-$t$ flow problem that has $O(m)$ arcs and the maximum capacity $U$, as well as target flow value $F'$ such that:
\begin{enumerate}[(a)]
\item if there exists a feasible $s$-$t$ flow of value $F$ in $G$ then a feasible $s$-$t$ flow of value $F'$ exists in $G'$;
\item given a feasible $s$-$t$ flow of value $F'$ in $G'$ one can construct in $\tO{m}$ time a feasible $s$-$t$ flow of value $F$ in $G$.
\end{enumerate}
\end{lemma}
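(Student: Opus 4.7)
The plan is to replace each directed arc of $G$ with a small undirected ``rectifier'' gadget in $G'$, and to lift the target value so that any feasible $s$-$t$ flow of the new value is forced to respect the original arc orientations. Concretely, for each directed arc $e = (u,v)$ of $G$ with capacity $c_e = u_e^+$ (recall $u_e^- = 0$ for directed instances), I will introduce a fresh auxiliary vertex $w_e$ and attach it via undirected edges of capacity $O(c_e)$ to $u$, $v$, $s$, and $t$; these edges replace the original arc. The construction takes $\tO{m}$ time, produces an undirected instance with $O(m)$ arcs/vertices, and preserves the maximum capacity $U$. I will set $F' := F + B$ where $B$ depends only on $G$ (e.g.\ $B = \sum_e c_e$) and is chosen as the aggregate ``baseline'' flow that each gadget is designed to carry from $s$ to $t$ via its $w_e$ in a saturated configuration.

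For part (a), given a feasible $G$-flow $\ff$ of value $F$, I will build $\ff'$ on $G'$ by (i) routing the baseline amount through the $s \to w_e \to t$ edges of each gadget and (ii) routing $f_e$ units along the $u \to w_e \to v$ edges of the $e$-gadget. Flow conservation at each $w_e$ follows by the symmetric design of the gadget, and the capacity constraints reduce to $0 \leq f_e \leq c_e$, which holds by feasibility of $\ff$. Summing signed outflows at $s$ gives $F + B = F'$, as required.

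For part (b), I will apply flow decomposition to any feasible $G'$-flow $\ff'$ of value $F'$. The key claim is that achieving $F'$ forces every gadget's baseline to be saturated, which in turn pins down the net gadget flow from $u$ to $v$ to some value $x_e \in [0, c_e]$. Setting $f_e := x_e$ will then yield a feasible $G$-flow of value $F$: capacity constraints are immediate, and flow conservation at each original $v \neq s,t$ holds because the only $G'$-edges incident to $v$ come from gadgets of arcs incident to $v$ in $G$, so their signed flows collapse to the $x_e$'s and reproduce the conservation identity at $v$ in $G$. Since we read off one scalar per gadget, the extraction runs in $\tO{m}$.

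The hard part will be designing the gadget so that the forcing in part (b) is airtight, i.e., so that $F'$ genuinely cannot be attained in $G'$ by any configuration using an arc-gadget edge in the ``wrong'' direction. A naive one-vertex gadget can be circumvented by $s \to \cdots \to v \to w_e \to u \to \cdots \to t$ detours, which route flow backwards through an arc without violating any local capacity. I expect to resolve this either by using a slightly richer gadget (e.g.\ two auxiliary vertices separated by a pinned bottleneck edge, so that any $s$-$t$ path through the gadget is committed to a single orientation of the arc) or by supplying enough parallel baseline capacity between $s$, $w_e$, and $t$ that the minimum $s$-$t$ cut in $G'$ is realized only when every baseline is saturated in the ``correct'' direction. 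Either way, the gadget remains of constant size, so the $O(m)$ edge bound, the maximum capacity $U$, and the $\tO{m}$ construction/extraction times are maintained.
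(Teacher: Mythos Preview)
The paper does not prove this lemma: it simply invokes a known reduction of directed to undirected maximum $s$-$t$ flow (Theorem~3.6.1 in \cite{Madry11}). So an acceptable answer here is either that citation or a self-contained reproduction of the classical gadget.

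Your proposal is in the right spirit---a per-arc gadget carrying an obligatory baseline $s$-$t$ load so that achieving $F'=F+B$ forces every arc to be used only in its forward orientation---but it is not a proof, and you say so yourself. After describing the one-vertex gadget you correctly observe that a $v \to w_e \to u$ detour can push flow backward through an arc without violating any local capacity, so the forcing claim underlying part~(b) fails for the gadget as stated. You then only promise to ``resolve this either by using a slightly richer gadget \ldots\ or by supplying enough parallel baseline capacity,'' without carrying either fix through. Until you fully specify the gadget and verify---via a min-cut calculation or a direct flow-decomposition argument---that every feasible $F'$-flow in $G'$ yields a per-arc value in $[0,c_e]$, part~(b) remains open.

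For comparison, the classical reduction the paper cites uses a constant-size gadget per arc with undirected edges attached directly among $u$, $v$, $s$, $t$ (no auxiliary vertex), and its correctness drops out of a short cut computation together with a linear-time flow extraction. If you want a self-contained argument rather than a citation, reproducing that construction is cleaner than patching yours.
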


The proof of the above lemma boils down to a known reduction of the directed maximum flow problem to its undirected version -- see, e.g., Theorem 3.6.1 in \cite{Madry11} for details. Consequently, from now on we can assume without loss of generality that we always have a well-coupled primal dual pair to initialize our framework.

\subsection{Progress Steps}\label{sec:progress_steps}

Once we described our basic framework and how to initialize it, we are ready to put forth its main ingredient: progress steps that enable us to gradually improve the primal dual solutions that we maintain. To this end, let us fix a well-coupled primal dual solutions $(\ff,\yy)$ with $\ff$ being a $\vchi_\alpha$-flow, for some $0\leq \alpha<1$, that is feasible in $G$. Our goal in this section will be to use  $(\ff,\yy)$ to compute, in nearly-linear time, another pair of well-coupled primal dual solutions $(\ff^+,\yy^+)$ that bring us closer to the optimal solutions. The flow $\ff^+$ we obtain will be a $\vchi_{\alpha'}$-flow feasible in $G$, for $\alpha'>\alpha$. So, the resulting flow update $\ff^+-\ff$ is an augmenting flow that is feasible in our current residual graph $G_{\ff}$ and pushes $(\alpha'-\alpha)$-fraction of the target $s$-$t$ flow. 

We will compute $(\ff^+,\yy^+)$ in two stages. First, in the {\em augmentation step}, we obtain a pair of solutions $(\hff, \hyy)$, with $\hff$ being a $\vchi_{\alpha'}$-flow, for $\alpha'>\alpha$, that is feasible in $G$. These solutions make progress toward the optimal solutions but might end up being not well-coupled. Then, in the {\em fixing step}, we correct $(\hff, \hyy)$ slightly by adding a carefully chosen flow circulation, i.e., a flow with all-zeros demands, to $\hff$ and an dual update to $\hyy$ so as to make the resulting solutions $(\ff^+,\yy^+)$ be well-coupled, as desired.

The key primitive in both these steps will be electrical flow computation. As we will see, the crucial property of electrical flows we will rely on here is their``self-duality``. That is, the fact that each electrical flow computation gives us both the flow and the corresponding vertex potentials that are coupled to it via Ohm's law \eqref{eq:Ohms_lawy}. This enables us not only to update our primal and dual solutions with that flow and vertex potentials, respectively, but also, much more crucially,  this coupling introduced by Ohm's law will be exactly what will allow us to (approximately) maintain our desired primal dual coupling property \eqref{eq:coupling}.

\paragraph{Augmentation Step.} To perform an augmentation step we compute first an electrical $\vchi$-flow $\tff$ in $G$ with the resistances $\rr$ defined as
\begin{equation}\label{eq:res_definition}
r_e:=\frac{1}{(\hu_e^+(\ff))^2}+\frac{1}{(\hu_e^-(\ff))^2},
\end{equation}
for each arc $e$. Note that the resistance $r_e$ is proportional, roughly, to the inverse of the square of the residual capacity $\hu_e(\ff)$ of that arc. So, in particular, it becomes very large whenever residual capacity of the arc $e$ is small, and vice versa. As we will see shortly, this correspondence will allow us to control the amount of flow that $\tff$ sends over each edge and thus ensure that the respective residual capacities are not violated. 

Let $\tvphi$ be the vertex potentials inducing $\tff$ (via the Ohm's law \eqref{eq:def_energy_potentials}). Then, we obtain the new primal and dual solution $(\hff, \hyy)$ as follows:
\begin{eqnarray}\label{eq:augmentation_step_update}
\hf_e & := & f_e + \delta \tf_e \text{ \ \ \ \ for each arc $e$}\\
\hy_v & := & y_v + \delta \tphi_v \text{ \ \ \ \ for each vertex $v$},\nonumber
\end{eqnarray}
where $\delta$ is the desired {\em step size}. Observe that this update is exactly an  augmentation of our current flow $\ff$ with the (scaled) electrical flow $\delta \tff$ and adding to our dual solution the (scaled) vertex potentials $\delta \tvphi$. This, in particular, means that the new flow $\hff$ we obtain here is a $\vchi_{\alpha'}$-flow with 
\begin{equation}\label{eq:growth_alpha}
\alpha'=\alpha + \delta.
\end{equation}

The step size $\delta$, however, will have to be carefully chosen. On one hand, as we see in \eqref{eq:growth_alpha}, the larger it is the more progress we make. On the other hand, though, it has to be small enough so as to keep the flow $\delta\tff$ feasible in $G_{\ff}$ (and thus, by Fact \ref{fa:residual_graph}, to make the flow $\hff+\ff$ feasible in $G$).

Note that, a priori, we have no direct control over neither the directions in which the electrical $\vchi$-flow $\tff$ is flowing thorough each arc nor the amount of that flow. So, in order to establish a grasp on what is the right setting of $\delta$, it is useful to define a {\em congestion vector} $\vrho$ given by
\begin{equation}
\label{eq:congestion_vec_def}
\rho_e :=\frac{\tf_e}{\hu_e(\ff)},
\end{equation} 
for each arc $e$. One can view $\rho_e$ as a normalized measure of how much the electrical flow $\tff$ overflows the residual capacity $\hu_e(\ff)$ of the arc $e$ and in what direction. In other words, the sign of $\rho_e$  encodes the direction of the flow $\tf_e$.

It is now not hard to see that to ensure that $\delta\tff$ is feasible in $G_{\ff}$, i.e., that no residual capacity is violated by the update \eqref{eq:augmentation_step_update}, it suffices that $\delta |\rho_e|\leq \frac{1}{4}$, for all arcs $e$, or, equivalently, that
\begin{equation}
\label{eq:l_infty_bound}
\delta \leq \frac{1}{4\inorm{\vrho}},
\end{equation} 
where $\inorm{\cdot}$ is the standard $\ell_\infty$-norm.

It is also worth pointing out that the congestion vector $\vrho$ turns out to capture (up to a small multiplicative factor) the contribution of each arc $e$ to the energy $\energy{\rr}{\tff}$ of $\tff$. In particular, we have the following simple but important observation.
\begin{lemma}\label{lem:energy_rho}
For any arc $e$, $\rho_e^2 \leq r_e \tf_e^2 \leq 2\rho_e^2$ and
\[
\norm{\vrho}{2}^2 \leq \energy{\rr}{\tff} \leq 2 \norm{\vrho}{2}^2,
\]
where $\norm{\cdot}{2}$ is the standard $\ell_2$-norm.
\end{lemma}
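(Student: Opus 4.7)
The plan is to prove the per-arc sandwich $\rho_e^2 \leq r_e \tf_e^2 \leq 2\rho_e^2$ directly from the definitions, and then obtain the energy bounds by summing over all arcs.

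First I would write out $r_e \tf_e^2$ using the definition of $r_e$ from \eqref{eq:res_definition}, namely
\[
r_e \tf_e^2 = \tf_e^2 \left(\frac{1}{(\hu_e^+(\ff))^2} + \frac{1}{(\hu_e^-(\ff))^2}\right),
\]
and then compare it to $\rho_e^2 = \tf_e^2/\hu_e(\ff)^2$. Since $\hu_e(\ff) = \min\{\hu_e^+(\ff), \hu_e^-(\ff)\}$, exactly one of the two reciprocals in the parenthesis equals $1/\hu_e(\ff)^2$ while the other is nonnegative, which immediately gives the lower bound $r_e \tf_e^2 \geq \tf_e^2/\hu_e(\ff)^2 = \rho_e^2$. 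For the upper bound, the same minimum relation gives $1/(\hu_e^+(\ff))^2 \leq 1/\hu_e(\ff)^2$ and $1/(\hu_e^-(\ff))^2 \leq 1/\hu_e(\ff)^2$, so $r_e \leq 2/\hu_e(\ff)^2$, and multiplying by $\tf_e^2$ gives $r_e \tf_e^2 \leq 2\rho_e^2$.

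Having established the per-arc inequalities, I would simply sum them over all arcs $e$. Since $\energy{\rr}{\tff} = \sum_e r_e \tf_e^2$ by \eqref{eq:def_energy_flow} and $\norm{\vrho}{2}^2 = \sum_e \rho_e^2$ by the definition \eqref{eq:l_p_norms} of the $\ell_2$-norm, summing the lower bounds yields $\norm{\vrho}{2}^2 \leq \energy{\rr}{\tff}$, and summing the upper bounds yields $\energy{\rr}{\tff} \leq 2\norm{\vrho}{2}^2$, which is exactly the second claim.

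There is no real obstacle here: the lemma is essentially a bookkeeping observation that the chosen resistance $r_e$ is, up to a factor of $2$, the square of the inverse of the symmetrized residual capacity $\hu_e(\ff)$. The only thing to be a little careful about is noting that $r_e$ is sandwiched between $1/\hu_e(\ff)^2$ and $2/\hu_e(\ff)^2$ (rather than, say, equal to one of them), which is why the factor of $2$ appears on the right-hand side but not on the left.
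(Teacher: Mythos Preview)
Your proposal is correct and matches the paper's own proof essentially line for line: both argue the per-arc sandwich $1/\hu_e(\ff)^2 \leq r_e \leq 2/\hu_e(\ff)^2$ from $\hu_e(\ff)=\min\{\hu_e^+(\ff),\hu_e^-(\ff)\}$, multiply through by $\tf_e^2$, and then sum over arcs to obtain the energy bounds.
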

\begin{proof} Note that, by definition of the resistance $r_e$ \eqref{eq:res_definition}, \eqref{eq:congestion_vec_def}, and the definition of the residual capacity $\hu_e$, we have that
\[
\rho_e^2 \leq \frac{(\rho_e \hu_e(\ff))^2}{\hu_e(\ff)^2} \leq \frac{\tf_e^2}{\hu_e(\ff)^2} \leq \left((\hu_e^+(\ff))^{-2}+(\hu_e^-(\ff))^{-2}\right) \tf_e^2 = r_e \tf_e^2 \leq \frac{2\tf_e^2}{\hu_e(\ff)^2} = 2\rho_e^2.
\]
We thus also have that
\[
\norm{\vrho}{2}^2=\sum_e \rho_e^2 \leq \sum_e r_e \tf_e^2 = \energy{\rr}{\tff} \leq \sum_e 2 \rho_e^2 = 2 \norm{\vrho}{2}^2,
\]
as desired.
\end{proof}

This link between the energy-minimizing nature of the electrical $\vsigma$-flow $\tff$ and  the $\ell_2$-norm of the congestion vector $\vrho$ will end up being very important. One reason for that is the fact that $\ell_\infty$-norm is always bounded by the $\ell_2$-norm. Consequently, we can use this connection to control the $\ell_\infty$-norm of the vector $\vrho$ and thus the value of $\delta$ needed to satisfy the feasibility condition \eqref{eq:l_infty_bound}.

It turns out, however, that just ensuring that our augmenting flow is feasible is not enough for our purposes. Specifically, we also need to control the coupling of our primal dual solutions, and the feasibility bound \eqref{eq:l_infty_bound} might be not sufficiently strong for this purpose. We thus have to develop analyze the impact of the update \eqref{eq:augmentation_step_update} on the coupling condition \eqref{eq:coupling} more closely.

To this end, let us first notice the following fact that stems from a standard application of the Taylor's theorem. Its proof appears in Appendix \ref{app:Taylor_approximation}

\begin{fact}
\label{fa:Taylor_approximation}
For any $u_1,u_2>0$ and $x$ such as $|x|\leq \frac{u}{4}$, where $u=\min\{u_1,u_2\}$, we have that
\[
\left(\frac{1}{u_1-x}-\frac{1}{u_2+x}\right)=\frac{1}{u_1}-\frac{1}{u_2}+\left(\frac{1}{u_1^2}+\frac{1}{u_2^2}\right)x+x^2\zeta,
\]
where $|\zeta|\leq \frac{5}{u^3}$.
\end{fact}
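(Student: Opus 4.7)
The claim is purely an elementary Taylor expansion identity, so the plan is to expand $\frac{1}{u_1-x}$ and $\frac{1}{u_2+x}$ separately around $x=0$ using Taylor's theorem with Lagrange remainder, subtract the two expansions, and bound the resulting second-order remainder.

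Concretely, define $f(t):=\frac{1}{u_1-t}$ and $g(t):=\frac{1}{u_2+t}$. A direct computation gives $f'(0)=\frac{1}{u_1^2}$, $f''(t)=\frac{2}{(u_1-t)^3}$ and $g'(0)=-\frac{1}{u_2^2}$, $g''(t)=\frac{2}{(u_2+t)^3}$. Taylor's theorem then yields
\[
f(x)=\frac{1}{u_1}+\frac{x}{u_1^2}+\frac{x^2}{(u_1-\xi_1)^3},
\qquad
g(x)=\frac{1}{u_2}-\frac{x}{u_2^2}+\frac{x^2}{(u_2+\xi_2)^3},
\]
for some $\xi_1,\xi_2$ lying between $0$ and $x$. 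Subtracting $g(x)$ from $f(x)$ collects exactly the two requested leading terms, and the error
\[
x^2\zeta \;=\; \frac{x^2}{(u_1-\xi_1)^3}-\frac{x^2}{(u_2+\xi_2)^3}
\]
remains to be controlled.

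To bound $\zeta$, I will use the hypothesis $|x|\le u/4$ with $u=\min\{u_1,u_2\}$. Since $|\xi_i|\le |x|\le u/4$, both denominators $u_1-\xi_1$ and $u_2+\xi_2$ are at least $u-u/4=3u/4$. Consequently each of the two fractions has absolute value at most $\frac{1}{(3u/4)^3}=\frac{64}{27u^3}$, and by the triangle inequality
\[
|\zeta|\;\le\;\frac{2\cdot 64}{27\,u^3}\;=\;\frac{128}{27\,u^3}\;<\;\frac{5}{u^3},
\]
as required. No step is expected to present a real obstacle; the only thing to be mildly careful about is the sign bookkeeping when differentiating $g$ (so that the $x$-coefficients of $f$ and $-g$ combine additively into $\frac{1}{u_1^2}+\frac{1}{u_2^2}$) and checking that the constant $128/27$ indeed falls below the stated bound of $5$.
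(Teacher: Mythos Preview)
Your proof is correct and essentially identical to the paper's. The only cosmetic difference is that the paper applies Taylor's theorem with Lagrange remainder directly to the combined function $h(x):=\frac{1}{u_1-x}-\frac{1}{u_2+x}$, obtaining a single intermediate point $z$, whereas you expand the two summands separately and get two intermediate points $\xi_1,\xi_2$; both routes arrive at the same bound $|\zeta|\le \frac{128}{27u^3}<\frac{5}{u^3}$.
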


Now, the above approximation bound enables us to get an accurate estimate of how the coupling condition evolves during the augmentation step \eqref{eq:augmentation_step_update}. That is, for any arc $e$, applying Fact \ref{fa:Taylor_approximation} with $u_1=\hu_e^+(\ff)$, $u_2=\hu_e^-(\ff)$ and $x=\delta \tf_e$, gives us that
\begin{equation}\label{eq:Taylor_approx}
\Phi_e(\hff)=\frac{1}{\hu_e^+(\ff)-\delta\tf_e}-\frac{1}{\hu_e^-(\ff)+\delta\tf_e} = \Phi_e(\ff)+\left(\frac{1}{(\hu_e^+(\ff))^2}+\frac{1}{(\hu_e^-(\ff))^2}\right)\delta \tf_e +(\delta \tf_e)^2\zeta_e, 
\end{equation}
with $|\zeta_e| \leq \frac{5}{\hu_e(\ff)^3}$.

Observe that the above expression tells us that the first order approximation of the change in the primal contribution of the arc $e$ to the coupling condition \eqref{eq:coupling} caused by the update \eqref{eq:augmentation_step_update} is exactly
\[
\Phi_e(\hff)-\Phi_e(\ff)\approx \left(\frac{1}{(\hu_e^+(\ff))^2}+\frac{1}{(\hu_e^-(\ff))^2}\right)\delta \tf_e = r_e \delta \tf_e,
\]
where we also used \eqref{eq:res_definition}. (In fact, the choice of the resistances $\rr$ was made exactly to make the above statement true.) 

Furthermore, by Ohm's law \eqref{eq:Ohms_lawy} and the definition of our augmentation step \eqref{eq:augmentation_step_update}, we have that
\[
r_e \delta \tf_e = \delta \left(\tphi_v-\tphi_u \right)=\Delta_e( \hyy)-\Delta_e( \yy),
\]
which is exactly the change in the dual contribution of the arc $e=(u,v)$ to the coupling condition \eqref{eq:coupling} caused by the augmentation step update \eqref{eq:augmentation_step_update}. 

So, up to first order approximation, these two contributions cancel out, leaving the coupling \eqref{eq:coupling} intact. Consequently, any increase in the violation of the coupling condition must come from the second-order terms in the approximation \eqref{eq:Taylor_approx}. The following lemma makes this precise and is proved in Appendix \ref{app:augmentation_coupling}.

\begin{lemma}
\label{lem:augmentation_coupling}
Let $0<\delta \leq (4\inorm{\vrho})^{-1}$ and the primal dual solution $(\ff,\yy)$ be $\vgamma$-coupled. Then, we have that, for any arc $e=(u,v)$,
\[
\left|\Delta_e(\hyy)-\Phi_e(\hff)\right|\leq \frac{\frac{4}{3}\gamma_e + 7(\delta\rho_e)^2}{\hu_e(\hff)}.
\]
\end{lemma}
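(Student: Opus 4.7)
The plan is to exploit the cancellation of first-order terms that the Ohm's law identity and the definition of the resistances were engineered to produce, and then just carefully track the second-order residual from the Taylor expansion.

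First, I would invoke Fact \ref{fa:Taylor_approximation} exactly as the text already previews in equation \eqref{eq:Taylor_approx}: setting $u_1=\hu_e^+(\ff)$, $u_2=\hu_e^-(\ff)$, $x=\delta\tf_e$ (which is admissible since $|\delta \tf_e|=\delta|\rho_e|\hu_e(\ff)\leq \hu_e(\ff)/4$), one obtains
\[
\Phi_e(\hff) \;=\; \Phi_e(\ff) + r_e\,\delta\tf_e + (\delta\tf_e)^2\zeta_e,
\qquad |\zeta_e|\leq \tfrac{5}{\hu_e(\ff)^3},
\]
where I used the very definition of $r_e$ in \eqref{eq:res_definition}. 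On the dual side, the update $\hyy=\yy+\delta\tvphi$ together with Ohm's law \eqref{eq:Ohms_lawy} applied to $\tff$ gives
\[
\Delta_e(\hyy)-\Delta_e(\yy) \;=\; \delta(\tphi_v-\tphi_u) \;=\; r_e\,\delta\tf_e.
\]

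Subtracting these two identities, the $r_e\delta\tf_e$ contributions cancel, leaving
\[
\Delta_e(\hyy)-\Phi_e(\hff) \;=\; \bigl(\Delta_e(\yy)-\Phi_e(\ff)\bigr) \;-\; (\delta\tf_e)^2\zeta_e.
\]
Applying the triangle inequality and the assumed $\vgamma$-coupling \eqref{eq:coupling} of $(\ff,\yy)$, together with the bound on $\zeta_e$ and the identity $\tf_e=\rho_e\hu_e(\ff)$, I get
\[
\bigl|\Delta_e(\hyy)-\Phi_e(\hff)\bigr| \;\leq\; \frac{\gamma_e}{\hu_e(\ff)} + \frac{5(\delta\tf_e)^2}{\hu_e(\ff)^3} \;=\; \frac{\gamma_e + 5(\delta\rho_e)^2}{\hu_e(\ff)}.
\]

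The last step is to replace $\hu_e(\ff)$ by $\hu_e(\hff)$ in the denominator. Since both $\hu_e^+(\hff)=\hu_e^+(\ff)-\delta\tf_e$ and $\hu_e^-(\hff)=\hu_e^-(\ff)+\delta\tf_e$ change by at most $|\delta\tf_e|\leq \hu_e(\ff)/4$, we have $\hu_e(\hff)\geq \tfrac{3}{4}\hu_e(\ff)$, hence $1/\hu_e(\ff)\leq \tfrac{4}{3}/\hu_e(\hff)$. Plugging this in yields
\[
\bigl|\Delta_e(\hyy)-\Phi_e(\hff)\bigr| \;\leq\; \frac{\tfrac{4}{3}\gamma_e + \tfrac{20}{3}(\delta\rho_e)^2}{\hu_e(\hff)} \;\leq\; \frac{\tfrac{4}{3}\gamma_e + 7(\delta\rho_e)^2}{\hu_e(\hff)},
\]
which is the claimed bound. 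There is no real obstacle here: all the genuine work was already done in setting up Fact \ref{fa:Taylor_approximation} and in choosing the resistances \eqref{eq:res_definition} so that the first-order primal perturbation of $\Phi_e$ matches the dual perturbation of $\Delta_e$; the only minor care needed is the constant-factor conversion between $\hu_e(\ff)$ and $\hu_e(\hff)$, and the numerical check that $20/3\leq 7$.
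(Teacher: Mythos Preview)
Your proof is correct and follows essentially the same argument as the paper's own proof in Appendix~\ref{app:augmentation_coupling}: both expand $\Phi_e(\hff)$ via Fact~\ref{fa:Taylor_approximation}, cancel the first-order term $r_e\delta\tf_e$ against $\Delta_e(\hyy)-\Delta_e(\yy)$ using Ohm's law, bound the second-order remainder by $5(\delta\rho_e)^2/\hu_e(\ff)$, and then convert the denominator to $\hu_e(\hff)$ via $\hu_e(\hff)\geq \tfrac{3}{4}\hu_e(\ff)$. Your presentation is in fact slightly cleaner, making the intermediate constant $20/3$ explicit before relaxing it to $7$.
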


\paragraph{Fixing Step.} Although Lemma \ref{lem:augmentation_coupling} enables us to bound the deterioration of the primal dual coupling during the augmentation step, we cannot prevent this effect altogether. Therefore, we need to introduce a {\em fixing step} that deals with this problem. More precisely, we develop a procedure that uses a single electrical flow computation to significantly reduce that violation, provided it was not too large to begin with. This is formalized by the following lemma, whose proof appears in Appendix \ref{app:fixing_step}.

\begin{lemma}
\label{lem:fixing_step}
Let $(\gg,\zz)$ be a $\vvarsigma$-coupled primal dual solution, with $\gg$ being a feasible $\vchi_{\alpha'}$-flow and $\norm{\vvarsigma}{2}\leq \frac{1}{50}$. In $\tO{m}$ time, we can compute a primal dual solution $(\ogg, \ozz)$ that is well-coupled and in which $\ogg$ is still a $\vchi_{\alpha'}$-flow.
\end{lemma}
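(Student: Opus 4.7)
The plan is to correct $(\gg,\zz)$ by a single electrical flow computation that simultaneously produces a circulation correction to $\gg$ and a matching potential correction to $\zz$; the decisive property will be that these two updates cancel the first-order change of the coupling, leaving only a quadratically small residue. Concretely, let $d_e := \Delta_e(\zz) - \Phi_e(\gg)$ denote the signed coupling defect on arc $e$, so that the hypothesis $\vvarsigma$-coupling gives $|d_e|\,\hu_e(\gg) \leq \varsigma_e$ with $\|\vvarsigma\|_2 \leq 1/50$. Let $\rr$ be the resistances \eqref{eq:res_definition} evaluated at $\gg$ and define the target correction $\aa \in \bbR^m$ by $a_e := d_e/r_e$, so that $r_e a_e = d_e$ by construction. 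Since $\aa$ is generally not a circulation, let $\tvsigma$ denote its demand and use Theorem \ref{thm:vanilla_SDD_solver} to compute in $\tO{m}$ time the electrical $\tvsigma$-flow $\tgg$ with resistances $\rr$ and its inducing potentials $\tvphi$. Set
\[
\ogg := \gg + (\aa - \tgg), \qquad \ozz := \zz - \tvphi.
\]
Because $\aa$ and $\tgg$ share demand $\tvsigma$, the update $\aa - \tgg$ is a circulation and $\ogg$ remains a $\vchi_{\alpha'}$-flow, as required.

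The key structural observation is that $\aa - \tgg$ is the $\rr$-weighted orthogonal projection of $\aa$ onto the space of circulations, so by Pythagoras in the $\rr$-inner product
\[
\energy{\rr}{\aa-\tgg} \;\leq\; \energy{\rr}{\aa} \;=\; \sum_e \frac{d_e^2}{r_e} \;\leq\; \sum_e \frac{\varsigma_e^2}{r_e\,\hu_e(\gg)^2} \;\leq\; \|\vvarsigma\|_2^2,
\]
where the final inequality uses $r_e\,\hu_e(\gg)^2 \geq 1$, immediate from \eqref{eq:res_definition}. Letting $\vrho$ denote the vector with entries $\rho_e := (\aa-\tgg)_e/\hu_e(\gg)$, Lemma \ref{lem:energy_rho} gives $\|\vrho\|_2^2 \leq \energy{\rr}{\aa-\tgg} \leq \|\vvarsigma\|_2^2 \leq 1/2500$, so in particular $\|\vrho\|_\infty \leq 1/50 < 1/4$. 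Consequently, each residual capacity is perturbed by at most a constant factor by the update, and $\ogg$ is feasible in $G$.

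It remains to control the coupling of $(\ogg,\ozz)$. Ohm's law for $\tgg$ gives $\Delta_e(\ozz) = \Delta_e(\zz) - r_e \tg_e$, while Fact \ref{fa:Taylor_approximation} applied with $u_1=\hu_e^+(\gg)$, $u_2=\hu_e^-(\gg)$, $x=(\aa-\tgg)_e$ expands $\Phi_e(\ogg) = \Phi_e(\gg) + r_e(a_e-\tg_e) + (a_e-\tg_e)^2 \zeta_e$ with $|\zeta_e|\leq 5/\hu_e(\gg)^3$. Subtracting and using $r_e a_e = d_e = \Delta_e(\zz) - \Phi_e(\gg)$ makes the linear terms cancel exactly, so
\[
|\Delta_e(\ozz) - \Phi_e(\ogg)| \;=\; (a_e-\tg_e)^2 |\zeta_e| \;\leq\; \frac{5\rho_e^2}{\hu_e(\gg)} \;\leq\; \frac{10\rho_e^2}{\hu_e(\ogg)},
\]
since $|\rho_e|\leq 1/4$ keeps $\hu_e(\ogg)$ within a factor $2$ of $\hu_e(\gg)$. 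The new violation vector $\bar{\vgamma}$ therefore satisfies $\bar{\gamma}_e \leq 10\rho_e^2$, which yields $\|\bar{\vgamma}\|_2^2 \leq 100\,\|\vrho\|_\infty^2\,\|\vrho\|_2^2 \leq 100\cdot(1/50)^4 < (1/100)^2$, establishing well-coupledness. The main conceptual step---and really the only place anything nontrivial happens---is the choice of target $\aa$: pinning $r_e a_e$ equal to the defect $d_e$ is precisely what produces the linear cancellation in the Taylor expansion, while the energy-minimality of $\tgg$ is what keeps the resulting circulation small in $\ell_2$ congestion and so small in $\ell_2$ violation.
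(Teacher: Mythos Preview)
Your proof is correct and rests on the same core idea as the paper's: form the target correction $a_e=d_e/r_e$ (the paper's $\theta_e$), fix its demand with an electrical flow, and bound the leftover coupling via the Taylor estimate of Fact~\ref{fa:Taylor_approximation}. The execution differs, however. The paper proceeds in two phases: it first adds the full correction $\vtheta$ to $\gg$, obtaining an intermediate $\gg'$ with the wrong demand but quadratically small coupling to $\zz$; it then computes the electrical $(-\hvsigma)$-flow with resistances taken at $\gg'$ and adds both the flow and its potentials, invoking the augmentation-step analysis (Lemma~\ref{lem:augmentation_coupling}) to control the second phase. You instead compute the electrical $\tvsigma$-flow $\tgg$ at the \emph{original} resistances and form the circulation $\aa-\tgg$ in one shot; recognizing this as the $\rr$-orthogonal projection of $\aa$ onto circulations gives the energy bound directly via Pythagoras, and the Ohm's-law term $r_e\tg_e$ in $\Delta_e(\ozz)$ combines with the linear Taylor term to produce \emph{exact} first-order cancellation. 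Your packaging is a bit cleaner---one Laplacian solve, one Taylor expansion, no intermediate point---and makes the quadratic nature of the residue more transparent; the paper's two-phase version, on the other hand, reuses Lemma~\ref{lem:augmentation_coupling} as a black box. One small remark: your appeal to Lemma~\ref{lem:energy_rho} for the non-electrical flow $\aa-\tgg$ uses the pointwise inequality $\rho_e^2\le r_e f_e^2$ from its proof rather than the lemma as literally stated, but this is harmless (and the paper does the same).
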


Now, after putting Lemmas \ref{lem:augmentation_coupling} and \ref{lem:fixing_step} together, we are finally able  to state the condition that $\delta$ in the update \eqref{eq:augmentation_step_update} has to satisfy in order to ensure that the solutions $(\ff^+,\yy^+)$ we obtain after performing the augmentation and fixing step is still well-coupled.  

\begin{lemma}\label{lem:l_4_condition_step}
$(\ff^+, \yy^+)$ is a well-coupled primal dual solution with $\ff^+$ being a $\vchi_{\alpha'}$-flow that is feasible in $G$ whenever 
\[
\delta \leq \left(33 \norm{\vrho}{4}\right)^{-1},
\]
\end{lemma}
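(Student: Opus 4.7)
The plan is to combine the feasibility condition \eqref{eq:l_infty_bound} with Lemma \ref{lem:augmentation_coupling} and Lemma \ref{lem:fixing_step}. Specifically, I will verify that the assumption $\delta \leq (33\|\vrho\|_4)^{-1}$ simultaneously guarantees two things about the intermediate pair $(\hff,\hyy)$ produced by the augmentation step \eqref{eq:augmentation_step_update}: (i) $\hff$ is a feasible $\vchi_{\alpha'}$-flow in $G$, and (ii) the resulting coupling violation vector $\hvgamma$ satisfies $\|\hvgamma\|_2 \leq 1/50$, so that $(\hff,\hyy)$ falls inside the hypothesis of the fixing step of Lemma \ref{lem:fixing_step}. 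Invoking that lemma will then yield the claimed well-coupled pair $(\ff^+,\yy^+)$ with $\ff^+$ still a $\vchi_{\alpha'}$-flow.

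Feasibility will be essentially free. From the elementary monotonicity $\|\vrho\|_\infty \leq \|\vrho\|_4$ of $\ell_p$-norms, together with $33 \geq 4$, I obtain $\delta \leq (33\|\vrho\|_4)^{-1} \leq (4\|\vrho\|_\infty)^{-1}$, which is exactly the condition \eqref{eq:l_infty_bound} ensuring that the scaled electrical flow $\delta\tff$ is feasible in the residual graph $G_{\ff}$. By Fact \ref{fa:residual_graph}(b) and \eqref{eq:growth_alpha}, $\hff=\ff+\delta\tff$ is then a feasible $\vchi_{\alpha'}$-flow in $G$ with $\alpha'=\alpha+\delta$.

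The heart of the argument is the coupling bound. Applying Lemma \ref{lem:augmentation_coupling} arc by arc, I take $(\hff,\hyy)$ to be $\hvgamma$-coupled with $\hat{\gamma}_e := \tfrac{4}{3}\gamma_e + 7(\delta\rho_e)^2$. Using the triangle inequality for $\|\cdot\|_2$ and the identity $\bigl\|(\rho_e^2)_e\bigr\|_2 = \bigl(\sum_e \rho_e^4\bigr)^{1/2} = \|\vrho\|_4^2$, I then get
\[
\|\hvgamma\|_2 \;\leq\; \tfrac{4}{3}\|\vgamma\|_2 + 7\delta^2 \|\vrho\|_4^2 \;\leq\; \tfrac{4}{3}\cdot\tfrac{1}{100} + \tfrac{7}{33^2} \;=\; \tfrac{1}{75}+\tfrac{7}{1089} \;<\; \tfrac{1}{50},
\]
where in the second inequality I plug in the well-coupling hypothesis $\|\vgamma\|_2 \leq 1/100$ and the assumed bound $\delta\|\vrho\|_4 \leq 1/33$.

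With (i) and (ii) established, Lemma \ref{lem:fixing_step} applies directly to $(\hff,\hyy)$ (with $\vvarsigma=\hvgamma$) and produces in $\tO{m}$ time the desired well-coupled pair $(\ff^+,\yy^+)$ whose primal flow $\ff^+$ is still a $\vchi_{\alpha'}$-flow, as claimed. I do not anticipate any genuine obstacle beyond the arithmetic; the only delicate point is that the constant $33$ is essentially forced by the requirement $\tfrac{4}{3}\cdot\tfrac{1}{100}+7 c^{-2} \leq \tfrac{1}{50}$, which reduces to $c^2 \geq 1050$, and $\sqrt{1050}\approx 32.4$ rounds up to $33$ rather than to a smaller integer.
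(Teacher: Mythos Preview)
Your proposal is correct and follows essentially the same route as the paper's own proof: both verify feasibility via $\|\vrho\|_\infty \le \|\vrho\|_4$, then bound $\|\hvgamma\|_2$ using Lemma~\ref{lem:augmentation_coupling}, the triangle inequality, and the identity $\bigl(\sum_e \rho_e^4\bigr)^{1/2}=\|\vrho\|_4^2$, before invoking Lemma~\ref{lem:fixing_step}. Your closing remark on why the constant $33$ is essentially tight is a nice addition not present in the paper.
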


The above lemma tells us that the step size $\delta$ of our augmentation step \eqref{eq:augmentation_step_update} should be governed by the $\ell_4$-norm (see \eqref{eq:l_p_norms}) of the congestion vector \eqref{eq:congestion_vec_def}. Observe that the $\ell_4$-norm of a vector is always upper bounding its $\ell_\infty$-norm. So, the condition \eqref{eq:l_infty_bound} is subsumed by this $\ell_4$-norm bound.

\begin{proof}
Note that we always have that $\norm{\vrho}{4}\geq \norm{\vrho}{\infty}$. So, the condition \eqref{eq:l_infty_bound} is automatically satisfied and the flow $\ff^+$ is indeed a $\vchi_{\alpha'}$-flow that is feasible in $G$.

Now, to argue about well-coupling of $(\ff^+, \yy^+)$, notice that, in the light of Lemma \ref{lem:fixing_step}, it suffices to argue that the primal dual solution $(\hff, \hyy)$ obtained after executing the augmentation step \eqref{eq:augmentation_step_update} is $\hvgamma$-coupled with $\norm{\hvgamma}{2}\leq \frac{1}{50}$. 

To this end, observe that by Lemma \ref{lem:augmentation_coupling} and the fact that $(\ff, \yy)$ are well-coupled, i.e., $\vgamma$-coupled with $\norm{\vgamma}{2}\leq \frac{1}{100}$, we have that
\begin{eqnarray*}
\norm{\hvgamma}{2} &= &\sqrt{\sum_e \hgamma_e^2} \leq \sqrt{ \sum_e \left(\frac{4}{3}\gamma_e + 7 (\delta \rho_e)^2\right)^2 }\\
&\leq & \frac{4}{3}\norm{\vgamma}{2} + 7 \delta^2 \sqrt{\sum_e \rho_e^4}\leq \frac{4}{300}+ 7\delta^2 \norm{\vrho}{4}^2\leq \frac{2}{150}+ \frac{7}{1089}<\frac{1}{50}.
\end{eqnarray*}
The lemma follows.
\end{proof}

\subsection{Analysis of the Algorithm}\label{sec:simple_analysis}

We want now to analyze the overall running time of our algorithm. Recall that given our target demand $\vchi$ that corresponds to sending $F$ units of flow from the source $s$ to the sink $t$, our overarching goal is to either route this demand fully in $G$ or provide a dual certificate that it is impossible to route $\vchi$ in $G$. 

We aim to achieve this goal by maintaining and gradually improving a primal dual solution $(\ff, \yy)$. In this solution, $\ff$ is a $\vchi_\alpha$-flow (which corresponds to routing an $\alpha$ fraction of the desired demand $\vchi$) that is feasible in $G$ and $\ff$ and $\yy$ are well-coupled, i.e., tied to each other via condition \eqref{eq:coupling} with the violation vector $\hgamma$ having sufficiently small $\ell_2$-norm. As described in Section \ref{sec:progress_steps}, each iteration runs in nearly-linear time and boils down to employing electrical flow computations to find an augmenting flow in the current residual graph $G_{\ff}$ (as well as to update the dual solution to maintain well-coupling).

Consequently, all we need to do now is to lower bound the amount of progress that each of these iteration makes. Ideally, we would like to prove that in each iteration in which $\ff$ already routed $\alpha$-fraction of the desired flow, i.e., $\ff$ is a feasible $\vchi_\alpha$-flow, the step size $\delta$ (see \eqref{eq:augmentation_step_update}) can be taken to be at least 
\begin{equation}\label{eq:delta_hat_def}
\delta \geq (1-\alpha) \hdelta,
\end{equation}
for some fixed $\hdelta>0$. Observe that if such a lower bound was established then, by \eqref{eq:growth_alpha}, it would imply that each iteration finds an augmenting flow that routes at least $\hdelta$-fraction of the amount of flow still to be routed. As a result, after executing at most $O(\hdelta^{-1} \log mU)$ iterations, the remaining value of flow to be routed would be at most $1$ and thus a simple flow rounding and augmenting path finding would yield the final answer (see, e.g., \cite{Madry13}), making the overall running time be at most $\tO{\hdelta^{-1}m \log U}$.

Unfortunately, a priori, it is difficult to provide any such non-trivial unconditional lower bound on the amount of primal progress we make in each iteration. After all, it could be the case that the target flow cannot be even routed in $G$. More importantly though, even if the target flow could be routed in $G$, and thus the residual graph always admitted augmenting flows of sufficiently large value, it is still not clear that our flow augmenting procedure would be able to find them. (It is worth noting that this problem is by no means specific to our algorithm. In fact, in {\em all} the maximum flow algorithms that rely on the augmenting paths framework ensuring that each iteration makes a sufficient primal progress is a chief bottleneck in the analysis.) 

The root of the problem here is that our flow augmenting procedure is based on electrical flows and these are undirected in nature. Consequently, the augmenting flows that it finds have to come from a fairly restricted class: $s$-$t$ flows that are feasible in a certain ``symmetrized" version of the residual graph. 

To make it precise, given a residual graph $G_{\ff}$, let us define its {\em symmetrization} $\hG_{\ff}$ to be an undirected graph in which each arc $e$ has its forward and backward capacity equal to $\hu_e(\ff)$, i.e., to the minimum of the forward $\hu_e^+(\ff)$ and backward $\hu_e^-(\ff)$ residual capacities in $G_{\ff}$.  Observe now that each (electrical) augmenting flow $\delta\tff$ found in the augmentation step (cf. \eqref{eq:augmentation_step_update}) is not only feasible in the residual graph $G_{\ff}$ but also in its symmetrization  $\hG_{\ff}$ -- this is exactly what the condition \eqref{eq:l_infty_bound} enforces. 

However, not all augmenting $s$-$t$ flows that are feasible in $G_{\ff}$ have to be feasible in $\hG_{\ff}$ too. In fact, it can happen that a large maximum $s$-$t$ flow value that the residual graph $G_{\ff}$ supports mostly vanishes in its symmetrization $\hG_{\ff}$, and thus prevents our algorithm from making a sufficient good primal progress. (Again, a difficulty of a exactly the same nature arises in the analysis of the classic flow augmenting algorithms such as \cite{EvenT75,Karzanov73,GoldbergR98}.)


\paragraph{Preconditioning Arcs.} It turns out, however, that there is a fairly simple way to circumvent the above difficulty and ensure that the kind of direct, ``primal-only`` analysis we hoped for above can indeed be performed. Namely, we just need to ``precondition`` our input graph by adding to it a large number of $s$-$t$ arcs of sufficiently large capacities. 

More precisely, we modify our input graph $G$ by adding to it $m$ undirected arcs between the source $s$ and sink $t$ with a forward and backward capacities equal to $2U$ and their orientation being from $s$ to $t$. We will call these arcs {\em preconditioning arcs}. Observe that after this modification the number of arcs of our graph as well as its maximum capacity at most doubled, and the maximum $s$-$t$ flow value changed additively by exactly $2mU$. In particular, the preconditioning arcs constitute exactly half of all the arcs and the amount of $s$-$t$ flow that they alone can support is at least twice the $s$-$t$ throughput of the rest of the graph. (Also, as these arcs are undirected they do not interfere with our initialization procedure -- cf. Lemma \ref{lem:initialization}.\footnote{More precisely, we can just first initialize our framework for the original graph, as before, and only precondition the undirected graph resulting from that initialization.}) Consequently, we can just focus on analyzing the running time of our algorithm on this preconditioned instance and the bounds we establish will immediately translate over to the original instance.\footnote{Note that the preconditioning arcs have to be fully saturated in any maximum $s$-$t$ flow. So, simply dropping these arcs and the flow on them will yield the maximum $s$-$t$ flow in the original graph.}

As already mentioned, somewhat surprisingly, once such preconditioning arcs are in place and our primal dual solution $(\ff, \yy)$ is well-coupled, it is always the case that the symmetrization $\hG_{\ff}$ of our current residual graph $G_{\ff}$ retains a constant fraction of the $s$-$t$ throughput. Intuitively speaking, well-coupling prevents the ``shortcutting'' preconditioning arc from getting ``clogged'' too quickly. Instead, their residual capacity is consumed at the same rate as that of the rest of the graph. Consequently, these arcs alone are always able to provide enough of $s$-$t$ throughput in the symmetrization $\hG_{\ff}$ of the residual graph $G_{\ff}$. This is made precise in the following lemma, whose proof appears in Appendix \ref{app:symmetrization_throughput}



\begin{lemma}\label{lem:symmetrization_throughput}
Let $(\ff, \yy)$ be a well-coupled primal dual solution in the (preconditioned) graph $G$ and let $\ff$ be a $\vchi_\alpha$-flow, for some $0\leq \alpha<1$, that is feasible in $G$. We have either that:
\begin{enumerate}[(a)]
\item there exists a $\vchi_{\frac{(1-\alpha)}{10}}$-flow $\ff'$ that is feasible in the symmetrization $\hG_{\ff}$ of the residual graph $G_{\ff}$;
\item or $\vchi^T\yy>\frac{2m}{(1-\alpha)}$ implying that our target demand $\vchi$ cannot be routed in $G$ (cf. Lemma \ref{lem:duality_certificate}).
\end{enumerate}
\end{lemma}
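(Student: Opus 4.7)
The plan is to argue by contrapositive: assume case (b) fails, so $Y := \vchi^T\yy \leq 2m/(1-\alpha)$, and construct an explicit $\vchi_{(1-\alpha)/10}$-flow $\ff'$ that is feasible in the symmetrization $\hG_\ff$. The natural candidate is to route $(1-\alpha)F/10$ evenly through the preconditioning arcs (oriented from $s$ to $t$), so $\ff'$ sends a fixed amount $\Theta((1-\alpha)F/m)$ along each one. By construction $\ff'$ has the desired $s$-$t$ value, and what remains is to verify that this per-arc amount is at most the symmetrized residual capacity $\hu_e(\ff) = 2U - |f_e|$ of each preconditioning arc. Along the way I will use two side facts: $F = O(mU)$, since the preconditioned max $s$-$t$ flow is at most $3mU$; and $Y \geq 0$ throughout, which can be verified by induction on iterations (the augmentation step changes $\vchi^T\yy$ by $\delta\, \vchi^T\tvphi = \delta\, \energy{\rr}{\tff} \geq 0$, and the fixing step is only a small perturbation).

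The heart of the proof exploits the fact that every preconditioning arc $e$ has the same endpoints $s$ and $t$, so $\Delta_e(\yy) = y_t - y_s = Y/F$ is a single common value. Well-coupling then forces $|\Phi_e(\ff) - Y/F| \leq \gamma_e/\hu_e(\ff)$ on every such $e$, which rearranges (using $Y \geq 0$) to
\[
|\Phi_e(\ff)|\, \hu_e(\ff) \;\leq\; Y\, \hu_e(\ff)/F + \gamma_e.
\]
Suppose for contradiction some preconditioning arc had $\hu_e(\ff) < c(1-\alpha)F/m$ for a sufficiently small absolute constant $c > 0$. Then $|f_e| > 2U - c(1-\alpha)F/m > U$ (using $F = O(mU)$ to keep the correction below $U$), and a direct computation from \eqref{eq:def_phi_e} gives
\[
|\Phi_e(\ff)|\, \hu_e(\ff) \;=\; \frac{2|f_e|}{2U + |f_e|} \;\geq\; \frac{2}{3}.
\]
On the other hand, plugging $Y \leq 2m/(1-\alpha)$ and the assumed bound on $\hu_e(\ff)$ into the coupling inequality above gives at most $2c + \norm{\vgamma}{2} \leq 2c + 1/100$, which is strictly less than $2/3$ for $c$ small enough. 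The contradiction yields the uniform lower bound $\hu_e(\ff) \geq c(1-\alpha)F/m$ on every preconditioning arc, so $\ff'$ is indeed feasible in $\hG_\ff$, establishing case (a).

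I expect the core calculation to be routine once set up as above; the conceptual point worth emphasizing is why preconditioning is what makes the argument possible. The preconditioning arcs are the only arcs whose $\Delta_e(\yy)$ is pinned to a single global quantity $Y/F$, so they are the only arcs for which a scalar dual bound on $\vchi^T\yy$ translates into per-arc residual-capacity guarantees. The remaining (original) arcs could a priori have their residual capacities almost exhausted without leaving any trace in $\vchi^T\yy$, but one does not need them: the $\Theta(m)$ preconditioning arcs alone carry a constant fraction of the target flow in $\hG_\ff$. The only mildly delicate points I anticipate are tracking the constants so that the per-arc amount $\Theta((1-\alpha)F/m)$ lies below the $\hu_e(\ff)$ lower bound just established, and verifying the invariant $Y \geq 0$ carefully through both the augmentation and fixing steps.
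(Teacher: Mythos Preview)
Your overall strategy matches the paper's: assume (b) fails, route $(1-\alpha)F/10$ evenly over the preconditioning arcs, and derive a contradiction from well-coupling if some preconditioning arc had $\hu_e(\ff)$ too small. The core computation you sketch --- $|\Phi_e(\ff)|\,\hu_e(\ff)=2|f_e|/(2U+|f_e|)\geq 2/3$ once $|f_e|>U$, versus the coupling upper bound $|Y|\hu_e(\ff)/F+\gamma_e$ --- is correct and in fact slightly slicker than the paper's (you bound $\gamma_e\le\|\vgamma\|_2\le 1/100$ directly, whereas the paper invokes identicality of the preconditioning arcs to get $\gamma_e\le 1/\sqrt{m}$).

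The one genuine gap is your reliance on $Y:=\vchi^T\yy\ge 0$. The lemma's hypothesis only gives $Y\le 2m/(1-\alpha)$, and your inequality needs $|Y|\le 2m/(1-\alpha)$, so you must rule out large negative $Y$. Your inductive argument is incomplete: the augmentation step indeed adds $\delta\,\energy{\rr}{\tff}\ge 0$ to $Y$, but in the fixing step the dual update is by the potentials $\hvphi$ of an electrical flow whose demand is $-\hvsigma$, not $\vchi$, so $\vchi^T\hvphi$ is \emph{not} an energy and has no obvious sign; ``small perturbation'' does not by itself keep $Y$ nonnegative when $Y$ is near zero. The paper avoids this issue entirely by a different, purely combinatorial step: it observes that all preconditioning arcs carry the \emph{same} flow value $f_e$ (by symmetry of the algorithm), and that $f_e$ cannot be large and negative because the original arcs cannot absorb that much backward $t$-$s$ flow; hence $f_e>0$, $\hu_e^+(\ff)=\hu_e(\ff)$, and $\Phi_e(\ff)>0$ directly, which then contradicts the upper bound $\Delta_e(\yy)\le 2m/((1-\alpha)F)$ without ever needing a lower bound on $Y$.

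So: either import the paper's flow-balance argument to pin down the sign of $f_e$ (this replaces your $Y\ge 0$ assumption and is a two-line observation once you use that the preconditioning arcs are identical and the non-preconditioning arcs support at most one third of $F^*$), or supply a rigorous proof that the fixing step cannot push $Y$ below $-2m/(1-\alpha)$. The first route is cleaner and makes the lemma hold for the well-coupled pairs that actually arise, without tracking a separate invariant through the algorithm.
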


Note that if our target demand $\vchi$ is exactly the demand $F^*\vchi_{s,t}$ of the maximum $s$-$t$ flow, the second condition cannot ever trigger and thus indeed it is the case that the symmetrization of the (preconditioned) residual graph retains a constant fraction of the original $s$-$t$ throughput.

\paragraph{Lower Bounding $\hdelta$.} Once we proved that the symmetrization $\hG_{\ff}$ of the residual graph $G_{\ff}$ retains most of its $s$-$t$ flow throughput (see Lemma \ref{lem:symmetrization_throughput}), we are finally able to provide an absolute lower bound $\hdelta$ (cf. \eqref{eq:delta_hat_def}) on the amount of primal progress each iteration of our algorithm makes. To this end, we upper bound first the energy, or, (almost) equivalently, the $\ell_2$-norm of the congestion vector (see Lemma \ref{lem:energy_rho}) of the electrical flow that we use in our augmentation step (see \eqref{eq:augmentation_step_update}). 

\begin{lemma}
\label{lem:energy_upperbound}
Let $(\ff, \yy)$ be a well-coupled primal dual solution, with $\ff$ being a $\vchi_\alpha$-flow that is feasible in $G_{\ff}$, for some $0\leq \alpha <1$. Let $\tff$ be an electrical $\vchi$-flow determined by the resistances $\rr$ given by \eqref{eq:res_definition}. We have that either:
\begin{enumerate}[(a)]
\item $\norm{\vrho}{2}^2\leq \energy{\rr}{\tff} \leq \frac{\cenergy m}{(1-\alpha)^2}$, where $\vrho$ is the congestion vector defined in \eqref{eq:congestion_vec_def}, and $\cenergy>0$ is an explicit constant;
\item or, $\vchi^T\yy>\frac{2m}{(1-\alpha)}$, i.e., our target demand $\vchi$ cannot be routed in $G$.
\end{enumerate}
\end{lemma}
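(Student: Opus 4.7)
The plan is to use the energy-minimizing property of electrical flows: since $\tff$ minimizes $\energy{\rr}{\cdot}$ among all $\vchi$-flows, it suffices to exhibit any $\vchi$-flow whose energy is at most $\cenergy m/(1-\alpha)^2$. The lower bound $\norm{\vrho}{2}^2 \leq \energy{\rr}{\tff}$ is immediate from Lemma \ref{lem:energy_rho}, so the entire content is the upper bound in case (a).

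First I would invoke Lemma \ref{lem:symmetrization_throughput}. If alternative (b) of that lemma triggers, we are done, so I may assume instead that there exists a $\vchi_{(1-\alpha)/10}$-flow $\ff'$ that is feasible in the symmetrization $\hG_{\ff}$ of the residual graph. By definition of the symmetrization, feasibility means $|f_e'| \leq \hu_e(\ff)$ for every arc $e$. Rescaling, the flow $\ff'' := \frac{10}{1-\alpha}\ff'$ is a $\vchi$-flow satisfying
\[
|f_e''| \;\leq\; \frac{10}{1-\alpha}\,\hu_e(\ff) \qquad \text{for every arc } e.
\]

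Next I would bound the energy of $\ff''$ under the resistances $\rr$ defined in \eqref{eq:res_definition}. Since $\hu_e(\ff)=\min\{\hu_e^+(\ff),\hu_e^-(\ff)\}$, we have $\hu_e(\ff)^2 \leq \hu_e^+(\ff)^2$ and $\hu_e(\ff)^2 \leq \hu_e^-(\ff)^2$, so
\[
r_e\,(f_e'')^2 \;\leq\; \left(\frac{1}{(\hu_e^+(\ff))^2}+\frac{1}{(\hu_e^-(\ff))^2}\right)\cdot\frac{100\,\hu_e(\ff)^2}{(1-\alpha)^2} \;\leq\; \frac{200}{(1-\alpha)^2}.
\]
Summing over the $m$ arcs gives $\energy{\rr}{\ff''} \leq 200m/(1-\alpha)^2$. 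Since $\tff$ is the energy minimizer among $\vchi$-flows, $\energy{\rr}{\tff} \leq \energy{\rr}{\ff''} \leq 200m/(1-\alpha)^2$, which yields case (a) with $\cenergy = 200$ (any explicit constant suffices; the exact value is irrelevant).

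I do not expect a significant obstacle here; the structural work has already been done in Lemma \ref{lem:symmetrization_throughput}, which guarantees the symmetrized residual graph retains $\Omega(1-\alpha)$ fraction of the $s$-$t$ throughput precisely so that this energy upper bound can be extracted via the variational (energy-minimization) characterization. The only subtlety to double-check is the interaction between the factor-$2$ slack coming from $r_e \leq 2/\hu_e(\ff)^2$ and the factor-$100$ from rescaling a $\vchi_{(1-\alpha)/10}$-flow to a $\vchi$-flow, which together produce the constant $\cenergy$.
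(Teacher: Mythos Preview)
Your proposal is correct and follows essentially the same approach as the paper's proof: invoke Lemma~\ref{lem:symmetrization_throughput} to obtain a feasible $\vchi_{(1-\alpha)/10}$-flow in $\hG_{\ff}$, rescale it to a $\vchi$-flow, bound its energy by $200m/(1-\alpha)^2$ using $r_e\leq 2/\hu_e(\ff)^2$, and appeal to energy minimality of $\tff$, arriving at the same constant $\cenergy=200$. The only cosmetic difference is that the paper bounds the energy of $\ff'$ first and then scales, whereas you scale first and then bound; the arithmetic is identical.
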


\begin{proof}
By Lemma \ref{lem:symmetrization_throughput}, we either have that $\vchi^T\yy>\frac{2m}{(1-\alpha)}$, i.e., our condition (b) triggers, or there exists a $\vchi_{\frac{(1-\alpha)}{10}}$-flow $\ff'$ that is feasible in the symmetrization $\hG_{\ff}$ of the residual graph $G_{\ff}$. 

Let us first bound the energy $\energy{\rr}{\ff'}$ of that flow $\ff'$. To this end, observe that, by definition \eqref{eq:res_definition}, we have that
\[
\energy{\rr}{\ff'} = \sum_e r_e (f_e')^2 \leq \sum_e \left(\frac{1}{(\hu_e^+(\ff))^2}+\frac{1}{(\hu_e^-(\ff))^2}\right) (f_e')^2 \leq 2 \sum_e \left(\frac{f_e'}{\hu_e(\ff)}\right)^2 \leq 2m,
\]
where the last inequality follows by the fact that the flow $\ff_e'$ is feasible in $\hG_{\ff}$. (Recall that the forward and backward capacities of an arc $e$ in $\hG_{\ff}$ are equal to $\hu_e(\ff)$.)

Consequently, if we scale $\ff'$ by $\frac{10}{(1-\alpha)}$ then it will become a $\vchi$-flow and its energy will be at most $\frac{200}{(1-\alpha)^2}m$. However, by definition, the energy $\energy{\rr}{\tff}$ of the electrical $\vchi$-flow $\tff$ cannot be larger than the energy of any other $\vchi$-flow. We thus can conclude that
\[
\norm{\vrho}{2}^2 \leq \energy{\rr}{\tff} \leq \frac{100}{(1-\alpha)^2}\energy{\rr}{\ff'}\leq \frac{200}{(1-\alpha)^2}m,
\]
where we also used Lemma \ref{lem:energy_rho}. So, taking $\cenergy:=200$ concludes the proof.
\end{proof}

Now, we should notice that by Lemma \ref{lem:l_4_condition_step} it suffices that we always have that 
\begin{equation}\label{eq:l_2_runntime_bound}
\delta \leq \frac{1}{33\norm{\vrho}{4}} \leq \frac{1}{33\norm{\vrho}{2}}\leq \frac{(1-\alpha)}{33 \sqrt{\cenergy m}},
\end{equation}
where we used Lemma \ref{lem:energy_upperbound} and the simple fact that it is always the case that $\norm{\vrho}{4}\leq \norm{\vrho}{2}$. Consequently, by \eqref{eq:delta_hat_def}, we see that we can take $\hdelta:=(33 \sqrt{\cenergy m})^{-1}$, which gives us the desired $\tO{m^{\frac{3}{2}} \log U}$ time algorithm. 

Finally, we want to emphasize again that even though our above analysis was based solely on analyzing our primal progress\footnote{In fact, one could perform it even without resorting explicitly to the dual infeasibility certificates $\vchi^T \yy$.}, maintaining the dual solution and the primal dual coupling \eqref{eq:coupling} was absolutely critical for its success.

\section{An Improved  \texorpdfstring{$\tO{m^{\frac{10}{7}}U^{\frac{1}{7}}}$}{\~O(m\textasciicircum(10/7) U\textasciicircum(1/7)}-Time  Algorithm}\label{sec:improved_algorithm}

In this section, we present an improved maximum $s$-$t$ flow algorithm that runs in time $\tO{m^{\frac{10}{7}}U^{\frac{1}{7}}}$, where $U$ is the value of the largest (integer) capacity of the graph. In the case of the unit capacities, i.e., $U=1$, this running time matches the running time of the $\tO{m}^{\frac{10}{7}}$-time algorithm of M\k{a}dry \cite{Madry13}, and improves over the best known bound of $\min\{\tO{(mU)^{\frac{10}{7}}}, \tO{m\sqrt{n}\log U}\}$ that stems for the work of M\k{a}dry \cite{Madry13} and Lee and Sidford \cite{LeeS14} whenever $U$ is moderately large (and the graph is sufficiently sparse).

To achieve this goal, we will modify the $\tO{m^{\frac{3}{2}}\log U}$-time algorithm presented in Section \ref{sec:basic_algorithm}. These modifications  will enable us to lower bound the amount of primal progress $\hdelta$ (cf. \eqref{eq:delta_hat_def}) this algorithm makes in each iteration. Specifically, we aim to show that
\begin{equation}
\label{eq:improved_del_hat_bound}
\hdelta \geq \left(m^{\frac{1}{2}-\eta}\right)^{-1},
\end{equation}
for certain $\eta:=\frac{1}{14}-\frac{1}{7} \log_m U - O(\log \log mU)$, whenever that modified algorithm executes progress steps (see Section \ref{sec:progress_steps}). Clearly, as discussed in Section \ref{sec:simple_analysis}, establish that lower bound on $\hdelta$ ensures that we need at most $\tO{m^{\frac{1}{2}-\eta}\log U}=\tO{m^{\frac{10}{7}}U^{\frac{1}{7}}}$ progress steps to converge to the optimum solution. 

\subsection{Modified Algorithm}\label{sec:modified_algorithm}

Before we explain how we modify the $\tO{m^{\frac{3}{2}}\log U}$-time algorithm we presented in Section \ref{sec:basic_algorithm}, let us understand first what is its main running time bottleneck. 

To this end, recall that the key quantity that captures the progress that each augmentation iteration makes is the $\ell_4$-norm $\norm{\vrho}{4}$ of the congestion vector $\vrho$ of the electrical $\vchi$-flow $\tff$ determined by the resistances $\rr$ (see \eqref{eq:res_definition}). More precisely, by Lemma \ref{lem:l_4_condition_step},  we need to always have that
\[
(1-\alpha) \hdelta\leq \delta \leq (33\norm{\vrho}{4})^{-1}.
\]

In Section \ref{sec:simple_analysis}, we bounded that $\ell_4$-norm by simply noticing that it always has to be upper bounded by the $\ell_2$-norm of that vector and that, by Lemma \ref{lem:energy_rho}, this $\ell_2$-norm is directly tied to the energy $\energy{\rr}{\tff}$ of the electrical flow $\tff$. Furthermore, since we were able to also prove that this energy is always $O\left(\frac{m}{(1-\alpha)^2}\right)$ (see Lemma \ref{lem:energy_upperbound}) the final $O(\sqrt{m})$ bound on $\hdelta^{-1}$ followed. (See \eqref{eq:l_2_runntime_bound} for details.) 

Now, the key observation is that such bounding of the $\ell_4$-norm of the vector $\vrho$ with its $\ell_2$-norm might be wasteful. That is, even though it is not hard to construct examples in which the congestion vector $\vrho$ has these two norms be fairly close, such examples are inherently fragile. More precisely, as first pointed out by Christano et al. \cite{ChristianoKMST11} in the context of an analogous $\ell_\infty$- vs $\ell_2$-norm trade off, whenever $\norm{\vrho}{4}\approx \norm{\vrho}{2}$, it must be necessarily be the case that most of the energy of the electrical flow $\tff$ is contributed by a very small set of arcs. Moreover, if one perturbs such arcs by increasing their resistance, this will result in a great increase of the energy of the corresponding electrical flow. Christano et al. \cite{ChristianoKMST11}, and later M\k{a}dry \cite{Madry13} have demonstrated that a careful exploitation of this phenomena can lead to ensuring that such fragile worst-case instances do not appear too often and thus a better bound on the $\ell_4$-norm can be achieved.

\paragraph{Arc Boosting.} Our improved algorithm will also follow the perturbation paradigm we outlined above. To this end, we introduce an operation of boosting that we will use to perturb high-energy contributing arcs whenever needed. 

Formally, let us fix a primal dual well-coupled solution $(\ff, \yy)$ and a particular arc $e=(u,v)$. Let us assume wlog that $\hu_e^+(\ff)=\hu_e(\ff)$ (the case of $\hu_e^-(\ff)=\hu_e(\ff)$ is completely symmetric). We define a {\em boost} of $e$ to be an operation that modifies the input graph and the solution $(\ff, \yy)$ in the following way:

\begin{enumerate}[(1)]
\item The arc $e$ is replaced by an $u$-$v$ path consisting of $\beta(e):=2+\ceil{\frac{2U}{\hu_e(\ff)}}$ arcs $e_1, \ldots, e_{\beta(e)}$, all oriented towards $v$.
\item The first two arcs $e_1, e_2$ are just copies of the arc $e$, with $u_{e_1}^+=u_{e_2}^+:=u_{e}^+$ and $u_{e_1}^-=u_{e_2}^-:=u_{e}^-$. 
\item For all the remaining $(\beta(e)-2)$ arcs, we have that $u_{e_i}^+:=+\infty$ and $u_{e_i}^-:=\tu$, for $2< i \leq \beta(e)$, where $\tu\geq U$ is a value that we will set below.
\item The flow $\ff$ pushes the original flow $f_e$ over the created $u$-$v$ path. 
\item Finally, if $v_0=u, v_1, \ldots, v_{\beta(e)}=v$ are the consecutive vertices on the created path then: 
\begin{itemize}
\item $y_{v_0}:=y_u$ and $y_{v_{\beta(e)}}:=y_v$;
\item $y_{v_1}:=y_v$, and $y_{v_2}:=y_{v_1}+\Phi_{e}(\ff)=y_{v}+\Phi_{e}(\ff)$;
\item $y_{v_3}, \ldots, y_{v_{\beta(e)-1}}$ are set so as $\Delta_{e_i}(\yy)=-\frac{\Delta_{e_2}(\yy)}{\beta(e)-2}=-\frac{\Phi_{e}(\ff)}{\beta(e)-2}$, for each $2<i\leq \beta(e)$. 
\end{itemize}
\end{enumerate}

Observe that after the above modifications, we have that
\begin{eqnarray*}
\Delta_{e_1}(\yy)-\Phi_{e_1}(\ff)& = &\Delta_{e}(\yy)-\Phi_{e}(\ff)\\
\Delta_{e_2}(\yy)&=&\Phi_{e_2}(\ff)=\Phi_{e}(\ff)\\
\Phi_{e_i}(\ff) &= & -\frac{1}{(\tu+f_e)}=\Delta_{e_i}(\yy), 
\end{eqnarray*}  
 for each $2<i\leq \beta(e)$, as long as we set 
 \[
 \tu:=\frac{\beta(e)-2}{\Phi_{e}(\ff)}-f_e.
 \] 
Therefore, the solution $(\ff, \yy)$ remains feasible in the (modified) residual graph $G_{\ff}$ and is well-coupled too. Also, note that since $\Phi_e(\ff)\leq \frac{1}{\hu_e(\ff)}$ and $|f_e|\leq U$, the fact that $(\beta(e)-2)\geq \frac{2U}{\hu_e(\ff)}$ ensures that $u_{e_i}^-=\tu\geq U$, for all $2<i\leq \beta(e)$. Consequently, the effective forward and backward capacities of the created path are the same as the forward and backward capacities of the original arc $e$, and thus the $s$-$t$ throughput of our graph remains unchanged. 

We thus see that, from our point of view, the boosting operation had only two real consequences: one of them is the intended effect and the other one is an undesired side effect. The intended effect is that the effective resistance with respect to the resistances $\rr$ (cf. \eqref{eq:res_definition}) of the created path is at least twice as large as the original resistance $r_e$ of the arc $e$. (Note that the arcs $e_1$ and $e_2$ have exactly the same residual capacities, and thus the resistances $r_{e_1}$ and $r_{e_2}$ are both equal to the resistance $r_e$ of the original arc $e$.) The undesired consequence, however, is that the number of arcs in our graph increased by $\beta(e)-1$.

The key impact of that (at least) doubling of the resistance  of the arc $e$ is that, intuitively, it makes the energy of our electrical $\vchi$-flow $\tff$ increase proportionally to the initial contribution of that arc to the (squared) $\ell_2$-norm of the congestion vector $\vrho$. This is made precise in the following lemma. (This lemma is stated in a slightly more general form for future reference. For now, one can think of the set $S$ consisting of a single arc $e$.)

\begin{lemma}
\label{lem:energy_increasy} Let $\tff$ be an electrical $\vchi$-flow determined by the resistance $\rr$ in our graph $G$ and let $\vrho$ be the corresponding congestion vector defined by \eqref{eq:congestion_vec_def}. If $\tff'$ is the electrical $\vchi$-flow determined by the resistances $\rr'$ in the graph $G'$ that was obtained by boosting a set of arcs $S$ then
\[
\energy{\rr'}{\tff'}\geq \left(1+\frac{1}{8}\left(\sum_{e\in S}\frac{\rho_e^2}{\norm{\vrho}{2}^2}\right) \right)\energy{\rr}{\tff}.
\] 
\end{lemma}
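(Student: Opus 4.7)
The plan is to use the dual characterization of electrical energy from Lemma~\ref{lem:effective_conductance} to upper bound $1/\energy{\rr'}{\tff'}$ and then invert. First, let $\vphi^*$ be the potentials inducing $\tff$ in $G$, and set $\hvphi := \vphi^*/\energy{\rr}{\tff}$, so that $\vchi^T \hvphi = 1$ and, by Lemma~\ref{lem:effective_conductance}, $\sum_{e=(u,v)} (\hphi_v - \hphi_u)^2 / r_e = 1/\energy{\rr}{\tff}$.

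Next, I construct a test potential $\tvphi$ on $G'$ with $\vchi^T \tvphi = 1$ as follows. On the vertices inherited from $G$ set $\tvphi = \hvphi$. For each boosted arc $e=(u,v) \in S$, replaced by the path $e_1,\ldots,e_{\beta(e)}$, assign the potentials on the intermediate vertices so as to minimize $\sum_i (\tphi_{v_i}-\tphi_{v_{i-1}})^2 / r'_{e_i}$ given the endpoint values $\hphi_u,\hphi_v$; by the classic series-resistance identity this minimum equals $(\hphi_v-\hphi_u)^2 / R^{\text{path}}_e$ with $R^{\text{path}}_e = \sum_i r'_{e_i}$. The first two arcs $e_1,e_2$ are identical copies of $e$ with the same residual capacities, so by \eqref{eq:res_definition} we have $r'_{e_1}=r'_{e_2}=r_e$, and since all other $r'_{e_i}$ are nonnegative, $R^{\text{path}}_e \geq 2 r_e$. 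The demand $\vchi$ is supported on the original vertices $s,t$, so $\vchi^T\tvphi = 1$ is preserved.

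Applying Lemma~\ref{lem:effective_conductance} to $G'$ with this test potential and summing arc-by-arc, I obtain
\[
\frac{1}{\energy{\rr'}{\tff'}} \le \sum_{e \notin S} \frac{(\hphi_v-\hphi_u)^2}{r_e} + \sum_{e \in S} \frac{(\hphi_v-\hphi_u)^2}{R^{\text{path}}_e} \le \frac{1}{\energy{\rr}{\tff}} - \frac{1}{2}\sum_{e \in S} \frac{(\hphi_v-\hphi_u)^2}{r_e}.
\]
Ohm's law~\eqref{eq:Ohms_lawy} on $G$ gives $\hphi_v-\hphi_u = r_e \tf_e/\energy{\rr}{\tff}$, so $(\hphi_v-\hphi_u)^2/r_e = r_e \tf_e^2/\energy{\rr}{\tff}^2$, and Lemma~\ref{lem:energy_rho} supplies both $r_e \tf_e^2 \ge \rho_e^2$ and $\energy{\rr}{\tff} \le 2\norm{\vrho}{2}^2$. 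Substituting and using $1/(1-x) \ge 1+x$ for $x\in[0,1)$ gives
\[
\energy{\rr'}{\tff'} \ge \energy{\rr}{\tff}\left(1+\frac{\sum_{e\in S}\rho_e^2}{2\,\energy{\rr}{\tff}}\right) \ge \energy{\rr}{\tff}\left(1+\frac{1}{4}\sum_{e\in S}\frac{\rho_e^2}{\norm{\vrho}{2}^2}\right),
\]
which is stronger than the claimed factor of $1/8$.

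The only non-routine point is the bound $R^{\text{path}}_e \ge 2 r_e$, and this is immediate since $e_1$ and $e_2$ alone already contribute $2r_e$ to the series resistance; the added arcs $e_3,\ldots,e_{\beta(e)}$ can only increase it. The remaining work is a direct application of the dual characterization together with Ohm's law and Lemma~\ref{lem:energy_rho}, so no substantive obstacle remains.
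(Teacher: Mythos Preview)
Your proof is correct and follows essentially the same approach as the paper: both use the dual characterization of Lemma~\ref{lem:effective_conductance} with the original potentials as test potentials on $G'$, exploit that the boosted path has series resistance at least $2r_e$ (since $e_1,e_2$ alone contribute $2r_e$), and then convert via Ohm's law and Lemma~\ref{lem:energy_rho}. The only cosmetic difference is that you explicitly interpolate the potentials along the new path, whereas the paper simply replaces the path by a single arc of effective resistance $r'_{u,v}\ge 2r_e$; and you use the sharper elementary bound $1/(1-x)\ge 1+x$ rather than the paper's $1/(1-\eps)\ge 1+\eps/2$, which is why you obtain the constant $1/4$ instead of $1/8$.
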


\begin{proof}
Let $\tvphi$ be the vertex potentials inducing $\tff$, by Lemma \ref{lem:effective_conductance} and \eqref{eq:congestion_vec_def}, we have that
\[
\frac{1}{\energy{\rr}{\tff}} = \sum_{e'=(u',v')} \frac{(\tphi_{v'}-\tphi_{u'})^2}{r_{e'} \energy{\rr}{\tff}^2}
\]
and $\vchi^T \tvphi = \energy{\rr}{\tff}$. 

We want now to use the same voltages $\tvphi$ to lowerbound the energy $\energy{\rr'}{\tff'}$ of the electrical $\vchi$-flow $\tff'$ after boosting the arcs in $S$. Observe that the effective resistance $r_{u,v}'$ between the endpoints $u$ and $v$ of some boosted arc $e=(u,v) \in S$ after that boost is at least $2r_e$. So, when we analyze the energy of $\tff'$ we can treat the $u$-$v$ path that was created as just an arc $e$ but with the resistance $r_e'$ equal to $r_{u,v}'\geq 2r_e$. 

Therefore, by using Lemma \ref{lem:effective_conductance} again, we can conclude that
\begin{eqnarray*}
\frac{1}{\energy{\rr'}{\tff'}} &\leq& \left(\sum_{e'=(u',v')\notin S} \frac{(\tphi_{v'}-\tphi_{u'})^2}{r_{e'} \energy{\rr}{\tff}^2}\right)+ \sum_{e=(u,v)\in S}\frac{(\tphi_v-\tphi_u)^2}{r_e' \energy{\rr}{\tff}^2}\\
&\leq& \left(\sum_{e'=(u',v')} \frac{(\tphi_{v'}-\tphi_{u'})^2}{r_{e'} \energy{\rr}{\tff}^2}\right)- \sum_{e=(u,v)\in S}\frac{(\tphi_v-\tphi_u)^2}{2r_e \energy{\rr}{\tff}^2} =\frac{1}{\energy{\rr}{\tff}} \left(1 - \sum_{e=(u,v)\in S}\frac{(\tphi_v-\tphi_u)^2}{2r_e \energy{\rr}{\tff}}\right)\\
&\leq& \frac{1}{\energy{\rr}{\tff}} \left(1 - \sum_{e\in S}\frac{r_e \tf_e^2}{2 \energy{\rr}{\tff}}\right)\leq \frac{1}{\energy{\rr}{\tff}} \left(1 - \sum_{e\in S}\frac{\rho_e^2}{4 \norm{\vrho}{2}^2}\right),
\end{eqnarray*}
where we also used Lemma \ref{lem:energy_rho} and Ohm's law \eqref{eq:Ohms_lawy}. The lemma follows by noticing that $\frac{1}{(1-\eps)}\geq (1+\frac{\eps}{2})$ whenever $0\leq \eps\leq \frac{1}{4}$.
\end{proof}

\paragraph{Boosting Use and Early Termination.} Once we introduced the boosting operation, we can state the two modifications of the basic algorithm from Section \ref{sec:basic_algorithm} that we will make.

The first one is relatively minor: we terminate our algorithm whenever 
\begin{equation}
\label{eq:early_termination}
(1-\alpha)F \leq m^{\frac{1}{2}-\eta}.
\end{equation} 
That is, once we know that we are within an additive factor of $m^{\frac{1}{2}-\eta}$ of the target flow, we do not execute progress steps anymore. Observe that in that case we can just round our current flow and compute the optimal solution using at most $m^{\frac{1}{2}-\eta}$ augmenting paths computation (see, e.g., \cite{Madry13} for details). So, if we perform these operations, our algorithm will still run within the desired running time bound.

The second modification specifies how and when we apply boosting operations. To this end, given an electrical $\vchi$-flow $\tff$ and its congestion vector $\vrho$, let us define an arc $e$ to be {\em high-energy} if 
\begin{equation}
\label{eq:def_high_energy}
|\rho_e|\geq \rho^* = \frac{m^{\frac{1}{2}-3\eta}}{\crho(1-\alpha)},
\end{equation}
for certain constant $\crho>0$ that we will set later (see \eqref{eq:non_high_energy_contribution} below). 

Now, recall that  our goal is to ensure that \eqref{eq:improved_del_hat_bound} holds, which corresponds to ensuring that, by Lemma \ref{lem:l_4_condition_step}, $\norm{\vrho}{4}\leq \frac{m^{\frac{1}{2}-\eta}}{33(1-\alpha)}$. However, for reasons that will be clear shortly, we will actually want to have a stronger, $\ell_3$-norm bound on $\vrho$. Specifically, we want to make a progress step only when
\begin{equation}
\label{eq:l_3_step_size}
\norm{\vrho}{3} \leq \frac{m^{\frac{1}{2}-\eta}}{33(1-\alpha)},
\end{equation}
and then use the step size 
\[
\delta=\frac{\hdelta}{(1-\alpha)}= \frac{1}{33 (1-\alpha) \norm{\vrho}{3}}
\]
when making the augmentation step described in Section \ref{sec:progress_steps}. Observe that as $\norm{\vrho}{3}\geq \norm{\vrho}{4}$, this new condition still satisfies the $\ell_4$-norm requirement set forth in Lemma \ref{lem:l_4_condition_step}.

Unfortunately, as mentioned earlier, it can sometime happen that the bound \eqref{eq:l_3_step_size} is violated. Therefore, we want to use our boosting technique to enforce that such violations do not happen too often. Specifically, whenever condition \eqref{eq:l_3_step_size} does not hold we boost all the arcs in the set $S^*$, which is defined to be the set of up to $k^*:=m^{4\eta}$ high-energy arcs $e$ with highest values of $|\rho_e|$. In other words, we order all the high-energy arcs in a non-increasing order with respect to $|\rho_e|$ and take $S^*$ to consist of the first up to $k^*$ of them. (Note that, in principle, in some iterations there can be only very few, or even no, high-energy arcs.) 

Once we boost the set $S^*$, we simply proceed to the next iteration (in the new version of the graph that our boosting operation created). Note that, at first glance, there is nothing preventing us from needing to execute such a boosting step in each iteration, and thus to never be able to execute progress steps. However, as we argue in the next section, we are actually guaranteed to be able to execute progress steps often enough and thus make sufficiently good progress overall.

\subsection{Analysis of the Improved Algorithm}\label{sec:improved_analysis}

We proceed now to analyzing our modified algorithm and establishing the improved running time bound. This will require tackling two issues. 

First one is controlling the increase in the number of arcs resulting from performing of all our boosting operations. Specifically, we want to maintain the following invariant.

\begin{invariant}
\label{inv:arc_number}
The total increase in the number of arcs is at most $\frac{m}{10}$. 
\end{invariant}

As we will see, ensuring that this invariant is maintained will be what determines our final bound on the value of $\eta$ (cf. \eqref{eq:improved_del_hat_bound}).

We need to maintain this invariant in order to be able to use the machinery we developed in Section \ref{sec:basic_algorithm}. (It is not hard to check that, as long as that invariant is not violated, the whole analysis performed there is still valid, subject to slight adjustment of the corresponding constants.)

The other issue we need to tackle corresponds to bounding the total number of boosting steps we execute. After all,  we want to ensure that we execute progress steps often enough. In order to obtain such a bound we will perform a potential based argument with our potential function being the energy $\energy{\rr}{\tff}$ of the electrical $\vchi$-flow $\tff$ we compute in each iteration. (Note that this electrical flow $\tff$ is always determined by the resistances $\rr$ (cf. \eqref{eq:res_definition}) in the current version of the graph $G$.) One can easily convince oneself  that this energy has to be at least $\frac{1}{U^2}$ at the beginning of the algorithm and, by Lemma \ref{lem:energy_upperbound} (and provided Invariant \ref{inv:arc_number} holds), it is never larger than 
\[
\energy{\rr}{\tff}\leq \frac{\cenergy m}{(1-\alpha)^2}\leq O(m^3 U^2).
\]
(Here,  the second inequality follows  due to the fact that $1\leq F\leq 2mU$ and we have our termination condition \eqref{eq:early_termination}.) We want now to establish two claims:

\begin{enumerate}[(I)]
\item Whenever a boosting step is executed, the energy $\energy{\rr}{\tff}$ increases by a factor of at least $\left(1+\cincrease m^{-2\eta}\right)$, for some constant $\cincrease>0$. (To prove this statement we will use Lemma \ref{lem:energy_increasy}.)
\item Whenever a progress step is performed, with the condition \eqref{eq:l_3_step_size} satisfied, the energy $\energy{\rr}{\tff}$ decreases by at most $\left(1+\cdrop m^{-2\eta}\right)$, for some other constant $\cdrop>1$. (Proving this statement is where we leverage the fact that we enforce the stronger, $\ell_3$-norm condition \eqref{eq:l_3_step_size} on $\vrho$.)
\end{enumerate}

Observe that once the above two claims are established, we can amortize the number of boosting steps against the number of good progress steps. Specifically, a simple calculation shows that unless the number of boosting step is within a $O(\log m \log U)$ factor of the number of the progress steps, the resulting increase of the energy $\energy{\rr}{\tff}$ would violate the absolute upper bound $O(m^3 U^2)$ we established. Furthermore, since we execute a progress step only when condition \eqref{eq:l_3_step_size} is satisfied, this means that our desired lower bound \eqref{eq:improved_del_hat_bound} on $\hdelta$ holds. Therefore, by our reasoning in Section \ref{sec:simple_analysis}, we know that the number of progress steps is at most $\tO{\hdelta^{-1}\log mU}=\tO{m^{\frac{1}{2}-\eta}\log mU}$. As each boosting and progress step can be executed in nearly-linear time, our desired improved running time bound will then follow. 

In the light of the above, it remains to only argue that claims (I) and (II) are indeed correct and that the Invariant \ref{inv:arc_number} is never violated.

\paragraph{Preservation of the Invariant \ref{inv:arc_number}.} To prove that the Invariant \ref{inv:arc_number} is indeed preserved, let us recall that each arc boosting operation increases the number of arcs by $\beta(e)-1\leq \frac{2U}{\hu_e(\ff)}$, where $\hu_e(\ff)$ is the residual capacity of the arc $e$ at the time the boosting occurred. As it turns out, since we only boost arcs that are high-energy, i.e., arcs $e$ with $|\rho_e|\geq \rho^*$ at the time of the boosting, one can use a simple vertex potential based argument to argue that $\beta(e)$ has to be always at most $O(m^{4\eta}U)$. 

\begin{lemma}
\label{lem:measure_bound}
Assume that Invariant \ref{inv:arc_number} holds, let $\tff$ be an electrical $\vchi$-flow determined by the resistances $\rr$ and let $e=(u,v)$ be a high-energy arc that was boosted. Then, the total arc number increase  $\beta(e)-1$ is at most $O(m^{4\eta}U)$.
\end{lemma}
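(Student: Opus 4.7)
The plan is to bound $\beta(e)-1 \leq 2U/\hat{u}_e(\ff)$ from above by showing that the residual capacity $\hat{u}_e(\ff)$ of any high-energy arc $e$ cannot be too small. Concretely, the goal is to prove a lower bound of the form $\hat{u}_e(\ff) \geq \Omega\!\left(m^{-4\eta}/U\right) \cdot U = \Omega(m^{-4\eta})$, which immediately yields $\beta(e) - 1 = O(m^{4\eta}U)$.

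To get this lower bound, I would link $|\rho_e|$ to the residual capacity through Ohm's law and the global energy bound. First I would note that the electrical $\vchi$-flow is an $s$-$t$ flow, so by the maximum principle for electrical potentials, the inducing potentials $\tvphi$ satisfy $\tphi_s \leq \tphi_w \leq \tphi_t$ for every vertex $w$; hence for any arc $e=(u,v)$, $|\tphi_v - \tphi_u| \leq \tphi_t - \tphi_s$. Combining this with Ohm's law \eqref{eq:Ohms_lawy} gives $|\tf_e| \leq (\tphi_t - \tphi_s)/r_e$. Next, using $\vchi = F\vchi_{s,t}$ and the standard identity $\energy{\rr}{\tff} = \vchi^T \tvphi = F(\tphi_t - \tphi_s)$, we can rewrite this as $|\tf_e| \leq \energy{\rr}{\tff}/(F r_e)$. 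Since by definition of the resistances \eqref{eq:res_definition} we have $r_e \geq 1/\hat{u}_e(\ff)^2$, this implies
\[
|\rho_e| \;=\; \frac{|\tf_e|}{\hat{u}_e(\ff)} \;\leq\; \frac{\energy{\rr}{\tff}\,\hat{u}_e(\ff)}{F}.
\]

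Now I would invoke Lemma \ref{lem:energy_upperbound} (valid because Invariant \ref{inv:arc_number} holds and, since $e$ was actually boosted, the dual certificate alternative of that lemma cannot have triggered) to get $\energy{\rr}{\tff} \leq \cenergy m/(1-\alpha)^2$. Combined with the high-energy hypothesis $|\rho_e| \geq \rho^* = m^{1/2-3\eta}/(\crho(1-\alpha))$, this yields
\[
\hat{u}_e(\ff) \;\geq\; \frac{F\rho^*}{\energy{\rr}{\tff}} \;\geq\; \frac{F(1-\alpha)}{\crho \cenergy \, m^{1/2+3\eta}}.
\]
Finally, since the algorithm did not terminate via \eqref{eq:early_termination}, we have $F(1-\alpha) \geq m^{1/2-\eta}$, so $\hat{u}_e(\ff) \geq 1/(\crho \cenergy\, m^{4\eta})$. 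Plugging into $\beta(e)-1 \leq 1 + 2U/\hat{u}_e(\ff)$ gives $\beta(e)-1 = O(m^{4\eta}U)$, as claimed.

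The main technical point to be careful about is the monotonicity of $s$-$t$ electrical potentials needed to pass from $|\tphi_v - \tphi_u|$ to $\tphi_t - \tphi_s$; everything else is a direct chain of substitutions using Ohm's law, the definition of $r_e$, Lemma \ref{lem:energy_upperbound}, the high-energy threshold $\rho^*$, and the early-termination condition \eqref{eq:early_termination}. A minor point is to verify that Lemma \ref{lem:energy_upperbound} applies at the moment of the boost: this is fine because by Invariant \ref{inv:arc_number} the graph still has $O(m)$ arcs and is well-coupled, and the dual-certificate alternative would have halted the algorithm rather than prompting another boost.
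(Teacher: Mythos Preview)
Your proposal is correct and follows essentially the same route as the paper's proof: both bound $|\tphi_v-\tphi_u|$ by $\tphi_t-\tphi_s$ via the maximum principle, relate the $s$--$t$ potential drop to $\energy{\rr}{\tff}/F$, invoke Lemma~\ref{lem:energy_upperbound} together with the early-termination bound \eqref{eq:early_termination}, and then use Ohm's law with $r_e\geq 1/\hu_e(\ff)^2$ and the high-energy threshold $\rho^*$ to force $\hu_e(\ff)\geq \Omega(m^{-4\eta})$. The only cosmetic difference is that the paper phrases the energy--potential link through the effective resistance $R_{s,t}$ (writing $F^2R_{s,t}=\energy{\rr}{\tff}$ and $\tphi_t-\tphi_s=FR_{s,t}$), whereas you use the equivalent identity $\energy{\rr}{\tff}=F(\tphi_t-\tphi_s)$ directly.
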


\begin{proof}
Recall that $\beta(e)-1\leq \frac{2U}{\hu_e(\ff)}$. So, it suffices that we prove that $\hu_e(\ff)\geq \Omega( m^{-4\eta})$. To this end, let $\tvphi$ be vertex potentials inducing the electrical flow $\tff$ and let $R_{s,t}$ be the effective resistance between the source $s$ and sink $t$ in our graph. As $\tff$ is an $s$-$t$ flow of value $F$, we need to have that
\[
F^2 R_{s,t}=\energy{\rr}{\tff} \leq \frac{\cenergy m}{(1-\alpha)^2},
\]
where we used Lemma \ref{lem:energy_upperbound} and the fact that Invariant \ref{inv:arc_number} holds. 

On the other hand, we know that the vertex potential drop $\tphi_t-\tphi_s$ between $s$ and $t$ is non-negative (as the electrical flow has to flow from $s$ to $t$) and at most 
\[
\tphi_t-\tphi_s = F R_{s,t} \leq \frac{\cenergy m}{(1-\alpha)^2F} \leq \frac{\cenergy m^{\frac{1}{2}+\eta}}{(1-\alpha)},
\]
where the last inequality follows as $(1-\alpha)F\geq m^{\frac{1}{2}-\eta}$ due to our termination condition \eqref{eq:early_termination}. 

However, as $\tff$ is an $s$-$t$ flow it must be the case that the vertex potential difference between the endpoints of our arc $e=(u,v)$ cannot be larger than such difference between the vertex potentials of $s$ and $t$. That is, we have that 
\begin{equation}\label{eq:potential_drop_bound}
|\tphi_v-\tphi_u| \leq |\tphi_t-\tphi_s|\leq \frac{\cenergy m^{\frac{1}{2}+\eta}}{(1-\alpha)}.
\end{equation}
But, by Ohm's law \eqref{eq:Ohms_lawy} and definition \eqref{eq:res_definition} of the resistance $\rr$, we have that
\[
|\tphi_v-\tphi_u| = |\tf_e| r_e \geq \frac{|\tf_e|}{\hu_e(\ff)^2} = \frac{|\rho_e|}{\hu_{e}(\ff)} \geq \frac{\rho^*}{\hu_e(\ff)}\geq  \frac{m^{\frac{1}{2}-3\eta}}{\crho(1-\alpha)\hu_e(\ff)},
\]
where we used the definition \eqref{eq:def_high_energy} of the high-energy arcs.

So, putting this inequality together with the bound \eqref{eq:potential_drop_bound}, gives us the desired lower bound on $\hu_e(\ff)$. The lemma follows.
\end{proof}

Observe now that as our boosting operation never boosts more than $k^*=m^{4\eta}$ arcs at a time and we have at most $\tO{m^{\frac{1}{2}-\eta}\log U}$ boosting steps, by Lemma \ref{lem:measure_bound}, the total arc number increase throughout the whole algorithm is at most
\[
m^{4\eta} \cdot \tO{m^{\frac{1}{2}-\eta}\log U} \cdot O(m^{4\eta} U) = \tO{m^{\frac{1}{2}+7\eta}U}<\frac{m}{10},
\]
provide we set $\eta:=\frac{1}{14}-\frac{1}{7} \log_m U - O(\log \log mU)$ with appropriately chosen constant in the $O(\log \log mU)$ term. Thus Invariant \ref{inv:arc_number} is indeed never violated.

\paragraph{Establishing Claim (I).} We want now to prove that each boosting step results in an increase of the energy $\energy{\rr}{\tff}$ by a factor of at least $\left(1+\Omega(m^{-2\eta})\right)$. In the light of Lemma \ref{lem:energy_increasy}, it suffices we show that if $S^*$ is the set of high-energy arcs that got boosted due to the condition \eqref{eq:l_3_step_size} not holding then these arcs had to have a large contribution to the (square) of the $\ell_2$-norm of the congestion vector $\vrho$. That is, we want to argue that
\begin{equation}
\label{eq:S_star_contribution}
\sum_{e\in S^*} \frac{\rho_e^2}{\norm{\vrho}{2}^2} \geq \Omega(m^{-2\eta}).
\end{equation}
Clearly, once we establish that then the desired energy increase will follow from Lemma \ref{lem:energy_increasy} applied to $S=S^*$.

To prove \eqref{eq:S_star_contribution}, let us first consider the case that $S^*$ has maximum size, i.e., $|S^*|=k^*=m^{4\eta}$. Note that all the arcs in $S^*$ are high-energy arcs. Therefore, \eqref{eq:def_high_energy} gives us in this case that 
\[
\sum_{e\in S^*} \frac{\rho_e^2}{\norm{\vrho}{2}^2} \geq m^{4\eta}\cdot  \frac{(\rho^*)^2}{\norm{\vrho}{2}^2} \geq \frac{m^{1-2\eta}}{\crho^2(1-\alpha)^2\norm{\vrho}{2}^2}\geq \Omega(m^{-2\eta}),
\]
where we used the fact that Lemma \ref{lem:energy_upperbound} implies that $\norm{\vrho}{2}^2 \leq \frac{\cenergy m}{(1-\alpha)^2}$. So, \eqref{eq:S_star_contribution} holds in this case and we can focus on the situation when $|S^*|<m^{4\eta}$. Observe that this means that there is no high-energy arcs that are not in $S^*$.

We want to argue that in such situation the arcs of $S^*$ contribute at least a half of the total (cubed) $\ell_3$-norm of the vector $\vrho$. That is, that we have that
\begin{equation}\label{eq:S_star_l_3_norm_bound}
\sum_{e\in S^*} |\rho_e|^3 \geq \left(\frac{m^{\frac{1}{2}-\eta}}{66(1-\alpha)}\right)^3.
\end{equation}

To this end, observe that the contribution of arcs that are not high energy (and thus not in $S^*$) to that (cubed) $\ell_3$-norm is, by the Cauchy-Schwartz inequality, \eqref{eq:def_high_energy} and Lemma \ref{lem:energy_upperbound}, at most
\begin{equation}\label{eq:non_high_energy_contribution}
\left(\sum_{e\notin S^*} |\rho_e|^3\right)\leq \left(\max_{e\notin S^*} |\rho_e|\right) \left(\sum_{e\notin S^*} \rho_e^2\right) \leq \rho^* \norm{\vrho}{2}^2 \leq \frac{\cenergy m^{\frac{3}{2}-3\eta}}{\crho(1-\alpha)^3} \leq \left(\frac{m^{\frac{1}{2}-\eta}}{66(1-\alpha)}\right)^3,
\end{equation}
provided the constant $\crho$ is set to be large enough. Therefore, since the condition \eqref{eq:l_3_step_size} does not hold, we must have that the inequality \eqref{eq:S_star_l_3_norm_bound} is indeed valid.

Now, if there exists an arc $e\in S^*$ such that $|\rho_e|\geq \frac{m^{\frac{1}{2}-\eta}}{(1-\alpha)}$ then the contribution of this arc alone suffices to make \eqref{eq:S_star_contribution} true. We can thus assume that $\max_{e\in S^*}|\rho_e| < \frac{m^{\frac{1}{2}-\eta}}{(1-\alpha)}$ and apply the Cauchy-Schwartz inequality to \eqref{eq:S_star_l_3_norm_bound} to obtain that
\[
\sum_{e\in S^*} \rho_e^2 \geq \frac{\sum_{e\in S^*} |\rho_e|^3}{\max_{e\in S^*} |\rho_e|} \geq \frac{1}{66}\left(\frac{m^{\frac{1}{2}-\eta}}{66(1-\alpha)}\right)^2,
\]
which, by Lemma \ref{lem:energy_upperbound}, establishes \eqref{eq:S_star_contribution}. Claim (I) is thus proved.

\paragraph{Establishing Claim (II).} To prove Claim (II) we will tie the potential decrease of energy during progress steps to the quantity that we already control: the $\ell_3$-norm of the congestion vector $\vrho$. (In fact, the desire to control the energy decrease is precisely the reason why we enforce the stronger $\ell_3$-norm condition \eqref{eq:l_3_step_size} instead of the $\ell_4$-norm condition that Lemma \ref{lem:l_4_condition_step} suggests.)

We make this connection precise in the following lemma, whose proof appears in Appendix \ref{app:energy_decrease_control}.

\begin{lemma}
\label{lem:energy_decrease_control}
Let $(\ff, \yy)$ be a well-coupled feasible primal dual solution and let $\tff$ be the corresponding electrical $\vchi$-flow determined by the resistances $\rr$ and congestion vector $\vrho$ that satisfies the $\ell_3$-norm condition \eqref{eq:l_3_step_size}. Then, after execution of the progress steps, as described in Section \ref{sec:progress_steps}, the energy $\energy{\rr}{\tff}$ decreases by a factor of at most
\[
\left(1+O\left(\frac{\norm{\vrho}{3}^2}{\norm{\vrho}{2}^2}\right)\right).
\]
\end{lemma}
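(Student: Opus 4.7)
The plan is to use the variational characterization in Lemma \ref{lem:effective_conductance} to lower bound the new energy $\energy{\rr^+}{\tff^+}$, where $\rr^+$ denotes the resistances \eqref{eq:res_definition} of the residual graph after the progress step and $\tff^+$ is the corresponding electrical $\vchi$-flow. Since a progress step does not change the graph structure, $\rr$ and $\rr^+$ can be compared arc-by-arc.

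First I will show the pointwise bound $r_e^+ \ge r_e(1 - O(\delta|\rho_e|))$. The augmentation step changes $\hu_e^{\pm}$ by $\pm\delta\tf_e = \pm\delta\rho_e \hu_e(\ff)$, and by the feasibility bound \eqref{eq:l_infty_bound} one has $\delta|\rho_e|\le 1/4$. The fixing step from Lemma \ref{lem:fixing_step} adds a correcting circulation whose per-arc magnitude is of strictly smaller order, since it only needs to repair a coupling violation that Lemma \ref{lem:augmentation_coupling} already bounds by $O((\delta\rho_e)^2)/\hu_e(\hff)$. Substituting both perturbations into \eqref{eq:res_definition} and Taylor-expanding $1/(\hu_e^{\pm})^2$ yields $|r_e^+ - r_e|/r_e = O(\delta|\rho_e|)$.

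Next, let $\tvphi$ be the vertex potentials inducing $\tff$ and set $\hvphi := \tvphi/\energy{\rr}{\tff}$. By Lemma \ref{lem:effective_conductance} applied to the old problem, $\vchi^T\hvphi = 1$ and $\sum_{e=(u,v)}(\hphi_v-\hphi_u)^2/r_e = 1/\energy{\rr}{\tff}$. Feeding $\hvphi$ as a test potential into Lemma \ref{lem:effective_conductance} for the new resistances $\rr^+$ gives
\[
\frac{1}{\energy{\rr^+}{\tff^+}} \;\le\; \sum_{e=(u,v)} \frac{(\hphi_v-\hphi_u)^2}{r_e^+} \;\le\; \sum_{e} \frac{(\hphi_v-\hphi_u)^2}{r_e}\bigl(1 + O(\delta|\rho_e|)\bigr).
\]
By Ohm's law \eqref{eq:Ohms_lawy}, $(\hphi_v-\hphi_u)^2/r_e = r_e\tf_e^2/\energy{\rr}{\tff}^2$, and by Lemma \ref{lem:energy_rho} $r_e\tf_e^2 \le 2\rho_e^2$, so the excess term above is at most
\[
\frac{O(\delta)}{\energy{\rr}{\tff}^2}\sum_e |\rho_e|^3 \;=\; \frac{O(\delta\,\norm{\vrho}{3}^3)}{\energy{\rr}{\tff}^2}.
\]
Since $\delta\norm{\vrho}{3} = O(1)$ (by the $\ell_3$ condition \eqref{eq:l_3_step_size} together with the choice of $\delta$) and $\energy{\rr}{\tff} \ge \norm{\vrho}{2}^2$ (Lemma \ref{lem:energy_rho}), multiplying through by $\energy{\rr}{\tff}$ gives
\[
\frac{\energy{\rr}{\tff}}{\energy{\rr^+}{\tff^+}} \;\le\; 1 + O\!\left(\frac{\norm{\vrho}{3}^2}{\norm{\vrho}{2}^2}\right),
\]
which is the claimed bound.

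The main obstacle is the first step: Lemma \ref{lem:fixing_step} is stated only as an abstract guarantee, so to justify the arc-wise resistance estimate I need to go into its construction (Appendix \ref{app:fixing_step}) and verify that the fixing circulation perturbs each $\hu_e^{\pm}$ by at most $O((\delta\rho_e)^2)\,\hu_e(\ff)$, so that it is dominated by the augmentation's $O(\delta|\rho_e|)\,\hu_e(\ff)$ perturbation. Everything after Step 1 is a mechanical combination of Lemma \ref{lem:effective_conductance}, Ohm's law, and Lemma \ref{lem:energy_rho}.
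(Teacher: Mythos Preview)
Your overall strategy---compare resistances arc-by-arc, plug the old potentials into Lemma~\ref{lem:effective_conductance} for the new resistances, and reduce to an $\ell_3$ sum---is exactly the paper's approach. The one place where your outline goes wrong is the per-arc control of the fixing step.

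You claim that the fixing circulation perturbs $\hu_e^{\pm}$ by at most $O((\delta\rho_e)^2)\,\hu_e(\ff)$, so that it is dominated by the augmentation term. If you open up the construction in Appendix~\ref{app:fixing_step}, you will see this is not true. The fixing step has two pieces. The first-order correction $\theta_e$ satisfies $|\theta_e|\le \varsigma_e\,\hu_e(\gg)$ with $\varsigma_e=\tfrac{4}{3}\gamma_e+7(\delta\rho_e)^2$; the pre-existing violation $\gamma_e$ is only $\ell_2$-controlled ($\|\vgamma\|_2\le\tfrac{1}{100}$), not pointwise $O((\delta\rho_e)^2)$. The second piece is an electrical circulation $\hvtheta$ whose congestion vector $\hvrho$ has $\|\hvrho\|_2$ bounded by a small constant, but there is no per-arc relation between $\hrho_e$ and $\rho_e$ at all---the electrical flow can concentrate on arcs where $\rho_e$ is tiny.

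The correct statement is therefore the one the paper uses: $r_e^+ \ge r_e\bigl(1+O(\delta|\rho_e|+\kappa_e)\bigr)^{-1}$ for some vector $\vkappa$ with $\|\vkappa\|_2=O(1)$. Carrying the extra $\kappa_e$ through your Step~2 yields an additional excess term $\sum_e \kappa_e\rho_e^2/\|\vrho\|_2^2$, which you bound by Cauchy--Schwarz:
\[
\sum_e \kappa_e\rho_e^2 \;\le\; \|\vkappa\|_2\,\|\vrho^2\|_2 \;=\; \|\vkappa\|_2\,\|\vrho\|_4^2 \;\le\; O(1)\cdot\|\vrho\|_3^2.
\]
With this one correction your argument goes through and matches the paper's.
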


Observe that, by \eqref{eq:l_3_step_size} and Lemma \ref{lem:energy_upperbound}, we have that
\[
\frac{\norm{\vrho}{3}^2}{\norm{\vrho}{2}^2}\leq \frac{m^{1-2\eta}(1-\alpha)^2}{(33)^2\cenergy m (1-\alpha)^2}=O(m^{-2\eta}),
\]
which immediately establishes Claim (II) and thus concludes the analysis of the improved algorithm.

\vspace{10pt}

\noindent{\bf Acknowledgments.} We are grateful to Michael Cohen, Slobodan Mitrovi\'c, Dimitris Tsipras, and Adrian Vladu for a number of helpful discussions on this topic.

\appendix
\newpage
\section{Appendix}\label{app:main}

\subsection{Proof of Fact \ref{fa:Taylor_approximation}}\label{app:Taylor_approximation}

Let us define
\[
g(x):=\left(\frac{1}{u_1-x}-\frac{1}{u_2+x}\right).
\]
By Taylor's theorem, we have that
\begin{equation}\label{eq:g_Taylor_approx}
g(x) = g(0)+g'(0) x+g''(z)\frac{x^2}{2} =\frac{1}{u_1}-\frac{1}{u_2}+\left(\frac{1}{u_1^2}+\frac{1}{u_2^2}\right)x+x^2\left(\frac{1}{(u_1-z)^3}-\frac{1}{(u_2+z)^3}\right)
\end{equation}
where $z$ is some value with $|z|\leq |x|$. As a result, we can conclude that
\begin{equation}\label{eq:g_tail_Taylor_approx}
|\zeta|=\left|\left(\frac{1}{(u_1-z)^3}-\frac{1}{(u_2+z)^3}\right)\right| \leq \left(\frac{64}{27 u_1^3}+\frac{64}{27 u_2^3}\right)\leq \frac{5}{u^3},
\end{equation}
as we wanted to show.

\subsection{Proof of Lemma \ref{lem:augmentation_coupling}}\label{app:augmentation_coupling}

By the definition of our update \eqref{eq:augmentation_step_update} and the approximation bound \eqref{eq:Taylor_approx} as well as the coupling condition \eqref{eq:coupling}, we have that 
\begin{eqnarray*}
\left|\Delta_e(\hyy) -\Phi_e(\hff)\right| &= & \left|\Delta_e(\yy)+\delta (\tphi_v-\tphi_u) -\Phi(\ff)- \left(\frac{1}{(\hu_e^+)^2}+\frac{1}{(\hu_e^-)^2}\right)\delta \tf_e - (\delta \tf_e)^2\zeta_e\right|\\
&\leq & \left|\delta \left((\tphi_v-\tphi_u) - r_e \tf_e \right) - (\delta \tf_e)^2\zeta_e\right| + \frac{\gamma_e}{\hu_e(\ff)}\\
&=& \left|  (\delta \tf_e)^2\zeta_e\right| + \frac{\gamma_e}{\hu_e(\ff)} \leq \frac{7 \delta^2 \frac{\tf_e^2}{\hu_e^2} + \frac{4}{3}\gamma_e}{\hu_e(\hff)} = \frac{7 (\delta \rho_e)^2 + \frac{4}{3}\gamma_e}{\hu_e(\hff)},
\end{eqnarray*}
where the second to last equality follows by Ohm's law \eqref{eq:def_energy_potentials}, the last equality follows by  \eqref{eq:congestion_vec_def} and we also used the fact that $\hu_e(\hff)\geq \frac{3}{4}\hu_e(\ff)$ since $\delta \leq (4\inorm{\vrho})^{-1}$. The lemma follows.

\subsection{Proof of Lemma \ref{lem:fixing_step}}\label{app:fixing_step}

For each arc $e$, let us define
\begin{equation}\label{eq:delta_def}
\theta_e:=\left(\frac{1}{(\hu_e^+(\gg))^2}+\frac{1}{(\hu_e^-(\gg))^2}\right)^{-1}\left(\Delta_e(\zz) - \Phi_e(\gg)\right).
\end{equation}
Intuitively, $\theta_e$ is the first-order correction to the coupling of the primal dual solution $(\gg, \zz)$ with respect to the arc $e$.

Note that by the fact that $(\gg,\zz)$ is $\vvarsigma$-coupled we have that
\begin{equation}\label{eq:bound_on_theta}
|\theta_e|\leq \left(\frac{1}{(\hu_e^+(\gg))^2}+\frac{1}{(\hu_e^-(\gg))^2}\right)^{-1}\left|\Delta_e(\zz) - \Phi_e(\gg)\right|\leq \hu_e(\gg)^2 \left(\frac{\varsigma_e}{\hu_e(\gg)}\right)=\varsigma_e \hu_e(\gg).
\end{equation}

 Let us now define a flow $\gg'$ to be the flow $\gg$ with the ``correcting`` flow $\vtheta$ added. That is, 
\[
g'_e:=g_e+\theta_e,
\]
for each arc $e$. Observe that, as $\inorm{\vvarsigma}\leq \norm{\vvarsigma}{2}\leq \frac{1}{50}$ and, by \eqref{eq:bound_on_theta}, $|\theta_e|\leq \varsigma_e \hu_e(\gg)$, this flow $\gg'$ is feasible in $G_{\gg}$ and, in fact, $\frac{51}{50}\hu_e(\gg)\geq \hu_e(\gg')\geq \frac{49}{50}\hu_e(\gg)$.

Furthermore, if we analyze the coupling of $(\gg',\zz)$, by Fact \ref{fa:Taylor_approximation}, for each arc $e$, we get that
\begin{eqnarray*}
\left| \Delta_e(\zz) - \Phi_e(\gg')\right|& = &\left| \Delta_e(\zz) - \left(\frac{1}{\hu_e^{+}(\gg)-\theta_e}-\frac{1}{\hu_e^{-}(\gg)+\theta_e}\right)\right|\\
 &= &\left| \Delta_e(\zz) - \Phi_e(\gg) - \left(\frac{1}{(\hu_e^+(\gg))^2}+\frac{1}{(\hu_e^-(\gg))^2}\right)\theta_e - \theta_e^2 \zeta_e'\right|\\
&=& \left|- \theta_e^2 \zeta_e'\right|\leq \frac{5\hu_e(\gg)^2 \varsigma_e^2}{\hu_e(\gg)^3}=\frac{5\varsigma_e^2}{\hu_e(\gg)}\leq \frac{51\varsigma_e^2}{10\hu_e(\gg')},
\end{eqnarray*}
where the third equality follows by the definition of $\theta_e$ \eqref{eq:delta_def} and the second to last inequality follows by \eqref{eq:bound_on_theta}.
 
We thus see that the solutions $(\gg', \zz)$ are well-coupled. In fact, they are $\tvgamma$-coupled with 
\begin{equation}\label{eq:bound_on_tgamma}
\norm{\tvgamma}{2}=\sqrt{\sum_e \tgamma_e^2}=\sqrt{\sum_e \left(\frac{11\varsigma_e^2}{2}\right)^2}\leq \frac{51}{10}\norm{\varsigma}{2}^2\leq \frac{51}{25000}.
\end{equation}

So, these solutions $(\gg', \zz)$ have all the desired properties except $\gg'$ is an $(\vchi_{\alpha'} + \hvsigma)$-flow, where $\hvsigma$ are the demands of the ``correcting`` flow $\vtheta$ we added -- and not a $\vchi_{\alpha'}$-flow we need.

To remedy this deficiency, we compute an electrical $(-\hvsigma)$-flow $\hvtheta$ determined by the to resistances 
\begin{equation}\label{eq:def_of_hr}
\hr_e:=\frac{1}{(\hu_e^+(\gg'))^2}+\frac{1}{(\hu_e^-(\gg'))^2},
\end{equation}
and obtain our desired solutions $(\ogg, \ozz)$ by adding this electrical flow to the flow $\gg'$, and simultaneously adding the vertex potentials $\hvphi$ that induce $\hvtheta$ to the dual solution $\zz$. In other words, we set
\begin{eqnarray}\label{eq:fixing_update}
\ogg & := & \gg' + \hvtheta\\
\ozz & := & \zz + \hvphi.\nonumber
\end{eqnarray}
Clearly, $\ogg$ is a $\vchi_{\alpha'}$-flow, as desired. It thus only remains to ensure that the solutions $(\ogg, \ozz)$ are still well-coupled.

To analyze the coupling of $(\ogg, \ozz)$ one should note first that the procedure we used to obtain these solutions from the solutions $(\gg', \zz)$ is analogous to the one we used obtain the solutions $(\hff, \hyy)$ from the solutions $(\ff, \yy)$ in our augmentation step. More precisely, the update \eqref{eq:fixing_update} is an exact analogue of the update \eqref{eq:augmentation_step_update} with $\delta=1$. 

So, to quantify the change in coupling we can follow the approach we used when analyzing the augmentation step. Specifically, let us define a congestion vector $\hvrho$ as
\[
\hrho_e := \frac{\htheta_e}{\hu_e(\gg')},
\]
for each arc $e$. By Lemma \ref{lem:energy_rho}, we know that
\[
\norm{\hvrho}{2}^2 \leq \energy{\hrr}{\hvtheta} \leq \energy{\hrr}{-\vtheta}, 
\]
where the second inequality follows by the energy minimizing property of electrical flows. (Note that $-\vtheta$ is a $(-\hvsigma)$-flow.) We thus have that, by \eqref{eq:def_of_hr},
\begin{eqnarray*}
\energy{\hrr}{-\vtheta} &=& \sum_e \hr_e (-\theta_e)^2 = \sum_e \left(\frac{1}{(\hu_e^+(\gg'))^2}+\frac{1}{(\hu_e^-(\gg'))^2}\right) \theta_e^2 \\
& \leq & \sum_e \frac{2}{\hu_e(\gg')^2} \left(\varsigma_e \hu_e(\gg)\right)^2 \leq \sum_e 2\left(\frac{51 \varsigma_e}{50}\right)^2\leq \frac{21}{20} \norm{\varsigma}{2}^2\leq \frac{1}{2000},
\end{eqnarray*}
where the first inequality follows by \eqref{eq:bound_on_theta}. As a result, we can conclude that
\begin{equation}\label{eq:bound_on_hrho}
\norm{\hvrho}{2}^2 \leq \energy{\hrr}{-\vtheta} \leq \frac{1}{2000},
\end{equation}
which ensures, in particular, that our analogue of condition \eqref{eq:l_infty_bound} (for $\delta=1$) is satisfied. 

Consequently, we can use Lemma \ref{lem:augmentation_coupling} to conclude that the solutions $(\ogg,\ozz)$ is $\hvgamma$-coupled with 
\[
\norm{\hvgamma}{2}=\sqrt{\sum_e \hgamma_e^2}\leq \sqrt{\sum_e \left(\frac{4}{3}\tgamma_e + 7 \hrho_e^2\right)^2}\leq \frac{4}{3} \norm{\tvgamma}{2}+ 7 \norm{\hvrho}{2}^2 \leq \frac{204}{75000} + \frac{7}{2000}<\frac{1}{100},
\] 
where we used the estimates \eqref{eq:bound_on_tgamma} and \eqref{eq:bound_on_hrho}. So, $(\ogg,\ozz)$ is well-coupled, as desired.

\subsection{Proof of Lemma \ref{lem:symmetrization_throughput}}\label{app:symmetrization_throughput}

Let us assume that $\vchi^T\yy\leq \frac{2m}{(1-\alpha)}$. Otherwise, by Lemma \ref{lem:duality_certificate}, we have a dual certificate that the target demand $\vchi$ cannot be routed in $G$, i.e., the condition (b) triggers.

As all the preconditioning arcs have $s$ and $t$ as their endpoints and are oriented from $s$ to $t$, we have that
\begin{equation}\label{eq:delta_preconditioning}
\Delta_e (\yy) = y_t-y_s = \frac{\vchi^T \yy}{F} \leq \frac{2m}{(1-\alpha)F},
\end{equation}
for each such arc $e$. Also, all these preconditioning arcs are completely indistinguishable from the point of view of our algorithm. They have the same endpoints, orientation, and initial (residual) capacities and stretch. Consequently, their resistances are always equal when we compute our electrical flows (see, e.g., \eqref{eq:res_definition}) and thus, in turn, their forward and backward residual capacities evolve in exactly the same way during each of the augmentation and fixing steps (see Section \ref{sec:progress_steps}). 

We want to argue that, for each such arc $e$,
\begin{equation}
\label{eq:precondition_arc_capacity_bound}
\hu_{e}(\ff)\geq \frac{(1-\alpha)F}{5m}.
\end{equation} 
As the residual capacity $\hu_e(\ff)$ is exactly the capacity of $e$ in the symmetrization $\hG_{\ff}$ of the residual graph $G_{\ff}$, and preconditioning arc constitute exactly half of all the $m$ arcs, a flow $\ff'$ that just routes $\frac{(1-\alpha)F}{5m}$ units of flow from $s$ to $t$ on each such arc will be the desired $\vchi_{\frac{(1-\alpha)}{10}}$-flow feasible in $\hG_{\ff}$, and our proof will follow.

To establish \eqref{eq:precondition_arc_capacity_bound}, let us fix some representative preconditioning arc $e$ and assume for the sake of contradiction that \eqref{eq:precondition_arc_capacity_bound} does not hold. Since the total preconditioning arc capacity constitutes at least $\frac{2}{3}$-fraction of the total $s$-$t$ capacity $F^*$ of the graph $G$, we must have that
\begin{equation}\label{eq:precond_u_plus_bound}
u^+_e=u_e^-\geq \frac{2}{m}\cdot \frac{2}{3} F^*  = \frac{4F^*}{3m},
\end{equation}
where $u^+_e$ and $u_e^-$ are the initial capacities of the preconditioning arcs. As a result, if \eqref{eq:precondition_arc_capacity_bound} did not hold, by \eqref{eq:capacities_residual_graph}, we know that
\[
|f_e|\geq \max\{u_e^+-\hu_e^+(\ff), u_e^--\hu_e^-(\ff)\}\geq \frac{4F^*}{3m}- \frac{(1-\alpha)F}{5m}\geq \frac{\frac{5}{2}F^*+\alpha F}{3m}> \frac{2F^*}{3m},
\]
where we used the fact that $F\leq \frac{3}{2}F^*$ as otherwise we would be able to immediately tell that it is impossible to route at least $\frac{2}{3}$-fraction of the $s$-$t$ flow only via preconditioning arcs. 

In fact, we also know that $f_e>0$. This is so, as $e$ is oriented from $s$ to $t$ and thus having $f_e<0$ would mean that there is a flow of strictly more than $\frac{F^*}{3}$ from $t$ towards $s$ over all the preconditioning arcs. It would, however, be impossible to balance this ``backward'' $t$-$s$ flow via an $s$-$t$ flow over the remaining arcs as they can support an $s$-$t$ flow of at most $\frac{F^*}{3}$. So, indeed $f_e>0$.

Furthermore, if \eqref{eq:precondition_arc_capacity_bound} would indeed not hold then $f_e>0$, \eqref{eq:capacities_residual_graph} and \eqref{eq:precond_u_plus_bound} would imply that
\[
\hu_e^+=\hu_e(\ff)< \frac{(1-\alpha)F}{3m} \leq \frac{F^*}{2m} \leq \frac{1}{2} u_e^- = \frac{1}{2} (\hu_e^-(\ff)-f_e) \leq \frac{\hu_e^-(\ff)}{2},
\]
where we again used the fact that $F\leq \frac{3}{2} F^*$. 

Now, combining the above bound with the fact that $(\ff, \yy)$ is well-coupled \eqref{eq:coupling}, we can conclude that, for any preconditioning arc $e$,
\begin{eqnarray*}
\Delta_e (\yy) &\geq &  \Phi_e(\ff) - \frac{\gamma_e}{\hu_e(\ff)} =\frac{1}{\hu_e^+(\ff)}-\frac{1}{\hu_e^-(\ff)}- \frac{\gamma_e}{\hu_e(\ff)} \geq \frac{(1-\gamma_e)}{\hu_e(\ff)}-\frac{1}{2\hu_e(\ff)}\\
&= & \frac{(1-2\gamma_e)}{2\hu_e(\ff)} \geq \frac{5m(1-\frac{2}{\sqrt{m}})}{2(1-\alpha)F}>\frac{2m}{(1-\alpha)F},
\end{eqnarray*}
where we again used the assumption that \eqref{eq:precondition_arc_capacity_bound} does not hold as well as the fact that $\gamma_e\leq \frac{1}{\sqrt{m}}$ since otherwise the contribution of all the $\frac{m}{2}$ preconditioning arcs  to the $\ell_2$-norm of the violation vector $\vgamma$ alone would make this norm violate the well-coupling bound. However, the above lower bound on $\Delta_e(\yy)$ directly violates our bound \eqref{eq:delta_preconditioning}. So, we reached a contradiction that proves \eqref{eq:precondition_arc_capacity_bound} and thus our lemma follows.

\subsection{Proof of Lemma \ref{lem:energy_decrease_control}}\label{app:energy_decrease_control}

By examining the augmentation and fixing steps that are performed during the progress steps (cf. \eqref{eq:augmentation_step_update} and \eqref{eq:fixing_update} in the proof of Lemma \ref{lem:fixing_step}) and applying standard Taylor approximations to the resulting changes in the resistances $\rr$ (cf. \eqref{eq:res_definition}), one obtains that, for each arc $e$, the resulting arc resistance $r_e'$ change is such that
\begin{equation}\label{eq:change_r_e_progress}
r_e' \leq \left(1+ O(\delta |\rho_e| + \kappa_e)\right)^{-1} r_e,
\end{equation}
where $\vrho$ is the congestion vector \eqref{eq:congestion_vec_def}, $\delta=(33\norm{\vrho}{3})^{-1}$ is the step size, and $\vkappa$ is a vector such that $\norm{\vkappa}{2}\leq 1$. (Roughly speaking, the $\delta \rho_e$ term corresponds to augmentation step and the $\kappa_e$ term corresponds to the fixing step.)

Now, let $\tphi$ be the vertex potentials inducing the electrical $\vchi$-flow $\tff$. By Lemma \ref{lem:effective_conductance}, we know that
\[
\frac{1}{\energy{\rr}{\tff}} = \sum_{e=(u,v)} \frac{(\tphi_v-\tphi_u)^2}{r_e \energy{\rr}{\tff}^2}.
\]
Furthermore, if $\energy{\rr'}{\tff'}$ is the energy of an electrical $\vchi$-flow $\tff'$ determined by the new resistances $\rr'$, applying Lemma \ref{lem:effective_conductance} again, we obtain that
\begin{eqnarray*}
\frac{1}{\energy{\rr'}{\tff'}} &\leq & \sum_{e=(u,v)} \frac{(\tphi_v-\tphi_u)^2}{r_e' \energy{\rr}{\tff}^2}\leq \sum_{e=(u,v)} \frac{\left(1+ O(\delta |\rho_e| + \kappa_e)\right)(\tphi_v-\tphi_u)^2}{r_e \energy{\rr}{\tff}^2}\\
&=& \frac{1}{\energy{\rr}{\tff}} \left(1+O\left(\sum_e \frac{(\delta \rho_e+\kappa_e)(r_e \tf_e^2)}{\energy{\rr}{\tff}}\right)\right)\\
& \leq & \frac{1}{\energy{\rr}{\tff}} \left(1+O\left(\sum_e \frac{(\delta \rho_e^3+\kappa_e \rho_e^2)}{\norm{\vrho}{2}^2}\right)\right),
\end{eqnarray*}
where we used \eqref{eq:change_r_e_progress}, Ohm's law \eqref{eq:Ohms_lawy} and Lemmas \ref{lem:energy_rho} and \ref{lem:energy_upperbound}.

Finally, observe that since $\delta=(33\norm{\vrho}{3})^{-1}$ and $\norm{\vkappa}{2}\leq 1$,  an application of the Cauchy-Schwartz inequality gives us that
\[
\sum_e (\delta \rho_e^3+\kappa_e \rho_e^2) \leq \delta \norm{\vrho}{3}^3 + \norm{\vkappa}{2}\norm{\vrho}{4}^2 \leq \delta \norm{\vrho}{3}^3 + \norm{\vkappa}{2}\norm{\vrho}{3}^2 \leq O(\norm{\vrho}{3}^2),
\]
as desired. The lemma thus follows.

\bibliographystyle{abbrv}
\bibliography{../main}

\begin{thebibliography}{10}

\bibitem{AhujaMO93}
R.~K. Ahuja, T.~L. Magnanti, and J.~B. Orlin.
\newblock {\em Network flows: theory, algorithms, and applications}.
\newblock Prentice-Hall, 1993.

\bibitem{AhujaMOR95}
R.~K. Ahuja, T.~L. Magnanti, J.~B. Orlin, and M.~R. Reddy.
\newblock {\em Applications of Network Optimization}, volume~7 of {\em
  Handbooks in Operations Research and Management Science}.
\newblock North-Holland, 1995.

\bibitem{AltBMP91}
H.~Alt, N.~Blum, K.~Mehlhorn, and M.~Paul.
\newblock Computing a maximum cardinality matching in a bipartite graph in time
  ${O}(n^{1.5}\sqrt{m/\log n})$.
\newblock {\em Inf. Process. Lett.}, 37(4):237--240, 1991.

\bibitem{AroraHK05}
S.~Arora, E.~Hazan, and S.~Kale.
\newblock The multiplicative weights update method: a meta-algorithm and
  applications.
\newblock {\em Theory of Computing}, 8(1):121--164, 2012.

\bibitem{Bollobas98}
B.~Bollobas.
\newblock {\em Modern Graph Theory}.
\newblock Springer, 1998.

\bibitem{BunchH74}
J.~R. Bunch and J.~E. Hopcroft.
\newblock Triangular factorization and inversion by fast matrix multiplication.
\newblock {\em Mathematics of Computation}, 28(125):231--236, 1974.

\bibitem{ChristianoKMST11}
P.~Christiano, J.~Kelner, A.~M{\k{a}}dry, D.~Spielman, and S.-H. Teng.
\newblock Electrical flows, {L}aplacian systems, and faster approximation of
  maximum flow in undirected graphs.
\newblock In {\em STOC'11: Proceedings of the 43rd Annual ACM Symposium on
  Theory of Computing}, pages 273--281, 2011.

\bibitem{CohenKMPPRX14}
M.~B. Cohen, R.~Kyng, G.~L. Miller, J.~W. Pachocki, R.~Peng, A.~B. Rao, and
  S.~C. Xu.
\newblock Solving {SDD} linear systems in nearly m $\log^{1/2} n$ time.
\newblock In {\em STOC'14: Proceedings of the 46th Annual ACM Symposium on
  Theory of Computing}, pages 343--352, 2014.

\bibitem{CoppersmithW90}
D.~Coppersmith and S.~Winograd.
\newblock Matrix multiplication via arithmetic progressions.
\newblock {\em Journal of Symbolic Computation}, 9:251--280, 1990.

\bibitem{CormenLRS09}
T.~H. Cormen, C.~E. Leiserson, R.~L. Rivest, and C.~Stein.
\newblock {\em Introduction to Algorithms}.
\newblock The MIT Press, 3rd edition, 2009.

\bibitem{Edmonds65}
J.~Edmonds.
\newblock {Paths, trees, and flowers}.
\newblock {\em Canadian Journal of Mathematics}, 17:449--467, 1965.

\bibitem{EdmondsK72}
J.~Edmonds and R.~M. Karp.
\newblock Theoretical improvements in algorithmic efficiency for network flow
  problems.
\newblock {\em Journal of the ACM}, 19(2):248--264, 1972.

\bibitem{Egervary31}
J.~Egerv\'ary.
\newblock Matrixok kombinatorius tulajdons\'agair\'ol.
\newblock {\em Matematikai \'es Fizikai Lapok}, 38:16--28, 1931.

\bibitem{EliasFS56}
P.~Elias, A.~Feinstein, and C.~E. Shannon.
\newblock A note on the maximum flow through a network.
\newblock {\em IRE Transactions on Information Theory}, 2, 1956.

\bibitem{EvenT75}
S.~Even and R.~E. Tarjan.
\newblock Network flow and testing graph connectivity.
\newblock {\em SIAM Journal on Computing}, 4(4):507--518, 1975.

\bibitem{FederM95}
T.~Feder and R.~Motwani.
\newblock Clique partitions, graph compression and speeding-up algorithms.
\newblock {\em Journal of Computer and System Sciences}, 51(2):261–--272, 1995.

\bibitem{FordF56}
L.~R. Ford and D.~R. Fulkerson.
\newblock Maximal flow through a network.
\newblock {\em Canadian Journal of Mathematics}, 8:399--404, 1956.

\bibitem{Frobenius12}
F.~G. Frobenius.
\newblock {\"U}ber matrizen aus nicht negativen elementen.
\newblock {\em Sitzungsberichte der K\"oniglich Preussischen Akademie der
  Wissenschaften zu Berlin}, pages 456--477, 1912.

\bibitem{Frobenius17}
F.~G. Frobenius.
\newblock {\"U}ber zerlegbare determinanten.
\newblock {\em Sitzungsberichte der K\"oniglich Preussischen Akademie der
  Wissenschaften zu Berlin}, pages 274--277, 1917.

\bibitem{GabowT91}
H.~N. Gabow and R.~E. Tarjan.
\newblock Faster scaling algorithms for general graph matching problems.
\newblock {\em Journal of the ACM}, 38(4):815--853, 1991.

\bibitem{GoldbergK04}
A.~V. Goldberg and A.~V. Karzanov.
\newblock Maximum skew-symmetric flows and matchings.
\newblock {\em Mathematical Programming}, 100(3):537--568, 2004.

\bibitem{GoldbergR98}
A.~V. Goldberg and S.~Rao.
\newblock Beyond the flow decomposition barrier.
\newblock {\em Journal of the ACM}, 45(5):783--797, 1998.

\bibitem{Harvey09}
N.~J.~A. Harvey.
\newblock Algebraic algorithms for matching and matroid problems.
\newblock {\em SIAM Journal on Computing}, 39(2):679--702, 2009.

\bibitem{HopcroftK73}
J.~Hopcroft and R.~Karp.
\newblock An $n^{5/2}$ algorithm for maximum matchings in bipartite graphs.
\newblock {\em SIAM Journal on Computing}, 2(4):225--231, 1973.

\bibitem{Karzanov73}
A.~V. Karzanov.
\newblock O nakhozhdenii maksimal'nogo potoka v setyakh spetsial'nogo vida i
  nekotorykh prilozheniyakh.
\newblock {\em Matematicheskie Voprosy Upravleniya Proizvodstvom}, 5:81--94,
  1973.
\newblock (in Russian; title translation: On finding maximum flows in networks
  with special structure and some applications).

\bibitem{KelnerLOS14}
J.~A. Kelner, Y.~T. Lee, L.~Orecchia, and A.~Sidford.
\newblock An almost-linear-time algorithm for approximate max flow in
  undirected graphs, and its multicommodity generalizations.
\newblock In {\em SODA'14: Proceedings of the 25th Annual ACM-SIAM Symposium on
  Discrete Algorithms}, pages 217--226, 2014.

\bibitem{KelnerOSZ13}
J.~A. Kelner, L.~Orecchia, A.~Sidford, and Z.~A. Zhu.
\newblock A simple, combinatorial algorithm for solving {SDD} systems in
  nearly-linear time.
\newblock In {\em STOC'13: Proceedings of the 45th Annual ACM Symposium on the
  Theory of Computing}, pages 911--920, 2013.

\bibitem{KingRT94}
V.~King, S.~Rao, and R.~Tarjan.
\newblock A faster deterministic maximum flow algorithm.
\newblock {\em Journal of Algorithms}, 17(3):447--474, 1994.

\bibitem{Konig15}
D.~K{\"o}nig.
\newblock Vonalrendszerek \'es determin\'ansok.
\newblock {\em Mathematikai \'es Term\'eszettudom\'anyi \'Ertesit{\"o}},
  33:221--229, 1915.

\bibitem{Konig16}
D.~K{\"o}nig.
\newblock {\"U}ber graphen und ihre anwendung auf determinantentheorie und
  mengenlehre.
\newblock {\em Mathematische Annalen}, 77:453--465, 1916.

\bibitem{Konig23}
D.~K{\"o}nig.
\newblock Sur un probl\'eme de la th\'eorie g\'en\'erale des ensembles et la
  th\'eorie des graphes.
\newblock {\em Revue de M\'etaphysique et de Morale}, 30:443--449, 1923.

\bibitem{Konig31}
D.~K{\"o}nig.
\newblock Graphok \'es matrixok.
\newblock {\em Matematikai \'es Fizikai Lapok}, 38:116--119, 1931.

\bibitem{KoutisMP10}
I.~Koutis, G.~L. Miller, and R.~Peng.
\newblock Approaching optimality for solving {SDD} systems.
\newblock In {\em FOCS'10: Proceedings of the 51st Annual IEEE Symposium on
  Foundations of Computer Science}, pages 235--244, 2010.

\bibitem{KoutisMP11}
I.~Koutis, G.~L. Miller, and R.~Peng.
\newblock A nearly $m \log n$-time solver for {SDD} linear systems.
\newblock In {\em FOCS'11: Proceedings of the 52nd Annual IEEE Symposium on
  Foundations of Computer Science}, pages 590--598, 2011.

\bibitem{KyngLPSS16}
R.~Kyng, Y.~T. Lee, R.~Peng, S.~Sachdeva, and D.~A. Spielman.
\newblock Sparsified cholesky and multigrid solvers for connection laplacians.
\newblock In {\em STOC'16: Proceedings of the 48th Annual ACM Symposium on
  Theory of Computing}, 2016.

\bibitem{KyngS16}
R.~Kyng and S.~Sachdeva.
\newblock Approximate {G}aussian elimination for {L}aplacians: Fast, sparse,
  and simple.
\newblock In {\em FOCS'16: Proceedings of the 57th Annual IEEE Symposium on
  Foundations of Computer Science}, 2016.

\bibitem{LeeRS13}
Y.~T. Lee, S.~Rao, and N.~Srivastava.
\newblock A new approach to computing maximum flows using electrical flows.
\newblock In {\em STOC'13: Proceedings of the 45th Annual ACM Symposium on the
  Theory of Computing}, pages 755--764, 2013.

\bibitem{LeeS14}
Y.~T. Lee and A.~Sidford.
\newblock Path finding methods for linear programming: Solving linear programs
  in $\tilde{{O}}(\sqrt{rank})$ iterations and faster algorithms for maximum
  flows.
\newblock In {\em FOCS'14: Proceedings of the 55th Annual IEEE Symposium on
  Foundations of Computer Science}, pages 424--433, 2014.

\bibitem{Lovasz79}
L.~Lov\'asz.
\newblock On determinants, matchings and random algorithms.
\newblock {\em Fundamentals of Computation Theory}, 565--574, 1979.

\bibitem{LovaszP86}
L.~Lov{\'a}sz and D.~M. Plummer.
\newblock {\em Matching Theory}.
\newblock Elsevier Science, 1986.

\bibitem{Madry10b}
A.~M{\k{a}}dry.
\newblock Fast approximation algorithms for cut-based problems in undirected
  graphs.
\newblock In {\em FOCS'10: Proceedings of the 51st Annual IEEE Symposium on
  Foundations of Computer Science}, pages 245--254, 2010.

\bibitem{Madry11}
A.~M{\k{a}}dry.
\newblock {\em From Graphs to Matrices, and Back: New Techniques for Graph
  Algorithms}.
\newblock PhD thesis, Massachusetts Institute of Technology, 2011.

\bibitem{Madry13}
A.~M{\k{a}}dry.
\newblock Navigating central path with electrical flows: from flows to
  matchings, and back.
\newblock In {\em FOCS'13: Proceedings of the 54th Annual IEEE Symposium on
  Foundations of Computer Science}, pages 253--262, 2013.

\bibitem{MicaliV80}
S.~Micali and V.~V. Vazirani.
\newblock An {${O}(\sqrt{|V|}\cdot |E|)$} algoithm for finding maximum matching
  in general graphs.
\newblock In {\em FOCS'80: Proceedings of the 21st Annual IEEE Symposium on
  Foundations of Computer Science}, pages 17--27, 1980.

\bibitem{Mucha05}
M.~Mucha.
\newblock {\em Finding maximum matchings via {G}aussian elimination}.
\newblock PhD thesis, University of Warsaw, 2005.

\bibitem{MuchaS04}
M.~Mucha and P.~Sankowski.
\newblock Maximum matchings via {G}aussian elimination.
\newblock In {\em FOCS'04: Proceedings of the 45th Annual IEEE Symposium on
  Foundations of Computer Science}, pages 248--255, 2004.

\bibitem{Orlin13}
J.~B. Orlin.
\newblock Max flows in {O}(nm) time, or better.
\newblock In {\em STOC'13: Proceedings of the 45th Annual ACM Symposium on the
  Theory of Computing}, pages 765--774, 2013.

\bibitem{Peng16}
R.~Peng.
\newblock Approximate undirected maximum flows in ${O}(m polylog(n))$ time.
\newblock In {\em SODA'16: Proceedings of the 27th Annual ACM-SIAM Symposium on
  Discrete Algorithms}, pages 1862--1867, 2016.

\bibitem{RabinV89}
M.~O. Rabin and V.~V. Vazirani.
\newblock Maximum matchings in general graphs through randomization.
\newblock {\em J. Algorithms}, 10(4):557--567, Dec. 1989.

\bibitem{Schrijver02}
A.~Schrijver.
\newblock On the history of the transportation and maximum flow problems.
\newblock {\em Mathematical Programming}, 91:437--445, 2002.

\bibitem{Schrijver03}
A.~Schrijver.
\newblock {\em Combinatorial Optimization: Polyhedra and Efficiency}.
\newblock Springer, 2003.

\bibitem{Schrijver05}
A.~Schrijver.
\newblock On the history of combinatorial optimization (till 1960).
\newblock In R.~W. K.~Aardal, G.L.~Nemhauser, editor, {\em Handbook of Discrete
  Optimization}, pages 1--68. Elsevier, 2005.

\bibitem{Sherman09}
J.~Sherman.
\newblock Breaking the multicommodity flow barrier for ${O}(\sqrt{\log
  n})$-approximations to sparsest cuts.
\newblock In {\em FOCS'09: Proceedings of the 50th Annual IEEE Symposium on
  Foundations of Computer Science}, pages 363--372, 2009.

\bibitem{Sherman13}
J.~Sherman.
\newblock Nearly maximum flows in nearly linear time.
\newblock In {\em FOCS'13: Proceedings of the 54th Annual IEEE Symposium on
  Foundations of Computer Science}, pages 263--269, 2013.

\bibitem{SpielmanT03}
D.~A. Spielman and S.-H. Teng.
\newblock Solving sparse, symmetric, diagonally-dominant linear systems in time
  ${O}(m^{1.31})$.
\newblock In {\em FOCS'03: Proceedings of the 44th Annual IEEE Symposium on
  Foundations of Computer Science}, pages 416--427, 2003.

\bibitem{SpielmanT04}
D.~A. Spielman and S.-H. Teng.
\newblock Nearly-linear time algorithms for graph partitioning, graph
  sparsification, and solving linear systems.
\newblock In {\em STOC'04: Proceedings of the 36th Annual ACM Symposium on the
  Theory of Computing}, pages 81--90, 2004.

\bibitem{Tutte47}
W.~T. Tutte.
\newblock The factorization of linear graphs.
\newblock {\em Journal of the London Mathematical Society}, 22:107--111, 1947.

\bibitem{Vaidya90}
P.~M. Vaidya.
\newblock Solving linear equations with symmetric diagonally dominant matrices
  by constructing good preconditioners.
\newblock Unpublished manuscript, UIUC 1990. A talk based on the manuscript was
  presented at the IMA Workshop on Graph Theory and Sparse Matrix Computation,
  October 1991, Mineapolis.

\bibitem{Vassilevska12}
V.~Vassilevska~Williams.
\newblock Multiplying matrices faster than {C}oppersmith-{W}inograd.
\newblock In {\em STOC'12: Proceedings of the 44th Annual ACM Symposium on the
  Theory of Computing}, pages 887--898, 2012.

\bibitem{Vazirani94}
V.~V. Vazirani.
\newblock A theory of alternating paths and blossoms for proving correctness of
  the {${O}(\sqrt{|V|} |E|)$} general graph matching algorithms.
\newblock {\em Combinatorica}, 14 (1):71--109, 1994.

\end{thebibliography}

\end{document}